\providecommand{\tabularnewline}{\\}
\providecommand{\algorithmname}{Algorithm}
\theoremstyle{plain}
      \newtheorem{prop}{\protect\propositionname}
      \newtheorem{prop}{\protect\propositionname}[chapter]
\theoremstyle{remark}
      \newtheorem{rem}{\protect\remarkname}
      \newtheorem{rem}{\protect\remarkname}[chapter]
\theoremstyle{plain}
	    \newtheorem{thm}{\protect\theoremname}
      \newtheorem{thm}{\protect\theoremname}[chapter]
\theoremstyle{plain}
      \newtheorem{lem}{\protect\lemmaname}
      \newtheorem{lem}{\protect\lemmaname}[chapter]
\title{Adaptive Conditional Distribution Estimation with Bayesian Decision Tree Ensembles}
\author[1]{Yinpu Li}
\author[2,*]{Antonio R. Linero}
\author[2]{Jared Murray}
\affil[1]{Department of Statistics, Florida State University}
\affil[2]{Department of Statistics and Data Sciences, University of Texas at Austin}
\affil[*]{\normalfont\texttt{antonio.linero@austin.utexas.edu}}
\date{\small\textbf{An updated and corrected version of this manuscript will appear in
    Journal of the American Statistical Association}}
\providecommand{\lemmaname}{Lemma}
\providecommand{\propositionname}{Proposition}
\providecommand{\remarkname}{Remark}
\providecommand{\theoremname}{Theorem}
\begin{document}
\maketitle 
\global\long\def\E{\mathbb{E}}%
\global\long\def\Var{\textnormal{Var}}%
\global\long\def\Tree{\mathcal{T}}%
\global\long\def\sM{\mathcal{M}}%
\global\long\def\Data{\mathcal{D}}%
\global\long\def\Sieve{\mathcal{F}}%
\global\long\def\Ancestors{\mathcal{A}}%
\global\long\def\Basis{\mathcal{B}}%
\global\long\def\Reals{\mathbb{R}}%
\global\long\def\iid{\stackrel{\textnormal{iid}}{\sim}}%
\global\long\def\yt{\widetilde{y}}%
\global\long\def\BART{\textnormal{BART}}%
\global\long\def\SBART{\textnormal{SBART}}%
\global\long\def\Normal{\textnormal{Normal}}%
\global\long\def\Beta{\textnormal{Beta}}%
\global\long\def\Bernoulli{\textnormal{Bernoulli}}%
\global\long\def\Gam{\textnormal{Gam}}%
\global\long\def\Uniform{\textnormal{Uniform}}%
\global\long\def\Dirichlet{\textnormal{Dirichlet}}%
\global\long\def\Categorical{\textnormal{Categorical}}%
\global\long\def\GP{\textnormal{GP}}%
\global\long\def\logit{\operatorname{logit}}%
\global\long\def\Identity{\textnormal{I}}%
\global\long\def\indep{\stackrel{\textnormal{indep}}{\sim}}%
\global\long\def\KL{K}%

\begin{abstract}
We present a Bayesian nonparametric model for conditional distribution
estimation using Bayesian additive regression trees (BART). The generative
model we use is based on rejection sampling from a base model. Typical
of BART models, our model is flexible, has a default prior specification,
and is computationally convenient. To address the distinguished role
of the response in the BART model we propose, we further introduce
an approach to targeted smoothing which is possibly of independent
interest for BART models. We study the proposed model theoretically
and provide sufficient conditions for the posterior distribution to
concentrate at close to the minimax optimal rate adaptively over smoothness
classes in the high-dimensional regime in which many predictors are
irrelevant. To fit our model we propose a data augmentation algorithm
which allows for existing BART samplers to be extended with minimal
effort. We illustrate the performance of our methodology on simulated
data and use it to study the relationship between education and body
mass index using data from the medical expenditure panel survey (MEPS).
\end{abstract}
\doublespacing

\section{Introduction}

We consider here the Bayesian nonparametric estimation of a conditional
distribution of a response $Y_{i}$ based on predictors $X_{i}$.
A common strategy is to introduce a latent variable $b$, and set
$Y_{i}\sim h(y\mid X_{i},b,\theta)$ given $b$, where $h(y\mid x,b,\theta)$
is a parametric model. This includes mixture models where $b$ is
a latent class indicator and $f(y\mid x)=\sum_{k=1}^{\infty}\omega_{k}(x)\ h(y\mid x,\theta_{k})$
\citep{dunson2009nonparametric,rodriguez2011nonparametric,dunson2008kernel,maceachern1999dependent},
as well as Gaussian process latent variable/covariate models where
$b$ is continuous \citep{wang2012gaussian,kundu2014latent,dutordoir2018gaussian}.

A conceptually simpler approach models $f(y\mid x)$ by tilting a
base model:
\begin{align}
f(y & \mid x)=\frac{h(y\mid x,\theta)\ \Phi\{r(y,x)\}}{\int h(\yt\mid x,\theta)\ \Phi\{r(\yt,x)\}\ d\yt}.\label{eq:main-model}
\end{align}
We refer to $h(y\mid x,\theta)$ as the \emph{base model} and $\Phi(\mu)$
as the \emph{link function}. When $r(y,x)$ is a constant, this model
reduces to the base model, allowing the user to center the model on
a desired parametric model. A special case of (\ref{eq:main-model})
takes $\Phi(\mu)=e^{\mu}$ and $r(y,x)$ to be a Gaussian process
\citep{tokdar2010bayesian}. In the context of (marginal) density
estimation, \citet{murray2009gaussian} proposed the Gaussian process
density sampler (GP-DS), which sets $\Phi(\mu)$ to be a sigmoidal
function such as a logistic function $\Phi(\mu)=(1+e^{-\mu})^{-1}$.
Methods based on Gaussian processes have elegant theoretical properties
\citep{van2008rates} but are somewhat difficult to work with due
to the integral in the denominator of (\ref{eq:main-model}) and the
need to compute, store, and invert an $N\times N$ matrix. The goal
of this paper is to propose a method based on (\ref{eq:main-model})
with the following desirable properties.
\begin{itemize}
\item Algorithms for posterior inference are straight-forward to implement.
\item The posterior possesses strong theoretical properties, obtaining posterior
convergence rates close to the best possible.
\item For routine use, a default prior can be used which empirically obtains
good practical performance.
\item It is easy to shrink towards the base model $h(y\mid x,\theta)$ so
that the model naturally adapts to the complexity of the data.
\end{itemize}
We propose a modification of the Bayesian additive regression trees
(BART) model of \citet{chipman2010bart} which we refer to as the
SBART density sampler (SBART-DS). We choose $r(y,x)$ to be a \emph{soft}
decision tree \citep{linero2017abayesian,irsoy2012soft} which smooths
in a targeted fashion on the the response variable $y$ \citep{starling2018bart}.
A benefit of the BART framework is that we are able to develop default
priors based on well-known heuristics and show that these default
priors perform well in practice.

To construct inference algorithms, we restrict the choice of $\Phi(\mu)$
to the logit, probit, or $t_{\nu}$-link functions. Our proposal is
similar to the GP-DS, but is adapted to conditional distribution estimation.
We construct an efficient MCMC algorithm to sample from the posterior
distribution by combining a data augmentation scheme of \citet{rao2016data}
with an additional layer of data augmentation. After performing this
data augmentation, we can update the parameters of the model using
the same Bayesian backfitting algorithm as \citet{chipman2010bart}.
Given that one has the ability to perform Bayesian backfitting, the
algorithms we construct are simple to implement.

We present theoretical results which show that suitably-specified
SBART-DS priors attain convergence rates which are close to the best
possible. Simplifying slightly, we show that in the high-dimensional
sparse setting, where only $D-1\ll P$ of the $P$ predictors are
relevant, SBART-DS can obtain the oracle rate of convergence $\epsilon_{n}=n^{-2\alpha/(2\alpha+D)}$
up-to a logarithmic term where $\alpha$ is related to the smoothness
level of the true conditional density. In a simulation study we show
that these theoretical results are suggestive of what occurs in practice,
as SBART-DS is capable of filtering out irrelevant predictors. 

In Section \ref{sec:Model-Description} we review BART/SBART and describe
a naive version of SBART-DS; we then describe our approach for targeted
smoothing which centers the prior on $r(\cdot,x)$ on a desired Gaussian
process. In Section \ref{sec:Posterior-Computation} we provide data
augmentation algorithms for fitting (\ref{eq:main-model}) when the
link function $\Phi(\mu)$ is the probit, logit, or Student's $t_{\nu}$
link. In Section \ref{sec:Theoretical-Results} we present our theoretical
results. In Section \ref{sec:Simulation-Study} we conduct a simulation
study which shows that SBART-DS outperforms a method based on Dirichlet
process mixtures when the number of predictors is moderate. We then
apply SBART-DS to data from the Medical Expenditure Panel Survey (MEPS)
to study the relationship between educational attainment and body
mass index in adult women. We conclude in Section \ref{sec:Discussion}
with a discussion.

\section{Model Description \label{sec:Model-Description}}

\subsection{Review of Bayesian Additive Regression Trees}

The Bayesian additive regression trees (BART) framework models a function
$r(x)$ as a sum of regression trees $r(x)=\sum_{m=1}^{M}g(x;\Tree_{m,}\sM_{m})$
where $\Tree_{m}$ denotes the topology and splitting rules of a binary
decision tree and $\sM_{m}=\{\mu_{m1},\ldots,\mu_{mL_{m}}\}$ gives
a prediction for each of the $L_{m}$ terminal (leaf) nodes of $\Tree_{m}$.
Figure \ref{fig:treefig} gives a schematic which shows how predictions
are obtained from a given (single) tree. \citet{chipman2010bart}
specify a prior $\pi_{\Tree}$ for the tree topologies and a prior
$\pi_{\sM}$ on the $\mu_{m\ell}$'s given $\Tree_{m}$. We write
$r\sim\BART(\pi_{\Tree},\pi_{\sM})$ to denote that $r$ has the associated
BART prior. Typically, we set $\mu_{m\ell}\iid\Normal(0,\sigma_{\mu}^{2}/M)$
so that $\Var\{r(x)\}=\sigma_{\mu}^{2}$ regardless of the number
of trees used in the model.

\begin{figure}
\begin{centering}
\includegraphics[width=0.9\textwidth]{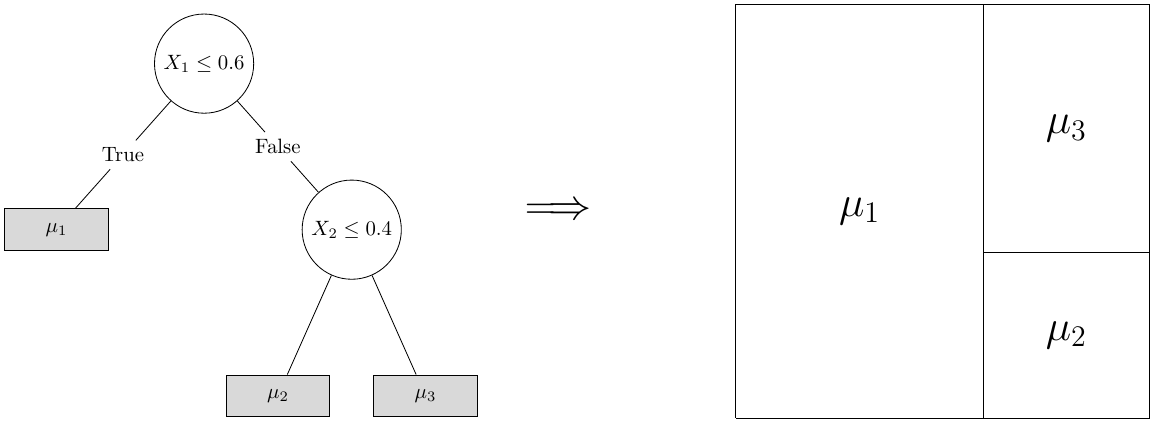}
\par\end{centering}
\caption{Schematic which shows how a decision tree induces a regression function.
Associated to the decision tree on the left is a partition of $[0,1]^{2}$
with the function $g(x;\protect\Tree,\protect\sM)$ returning $\mu_{1},\mu_{2}$,
or $\mu_{3}$.}

\label{fig:treefig}

\end{figure}

A problem with methods based on decision trees is that realizations
of $r\sim\BART(\pi_{\Tree},\pi_{\sM})$ will not be continuous in
$x$. This is particularly problematic for density estimation, as
we generally prefer estimates of the density to be continuous. A smooth
variant of BART called soft BART (SBART) was introduced by \citet{linero2017abayesian}.
This takes the tree $\Tree_{m}$ to be a \emph{smooth} decision tree,
where observations are assigned a weight $\varphi_{m\ell}(x)$ to
leaf node $\ell$ of tree $m$. As a point of comparison, non-soft
decision trees use the weights $\varphi_{m\ell}(x)=\prod_{b\in\Ancestors_{m\ell}}I(x_{j_{b}}\le C_{b})^{1-R_{b}}\ I(x_{j_{b}}>C_{b})^{R_{b}},$
where $\Ancestors_{m\ell}$ denotes the collection of branches which
are \emph{ancestors} of leaf $\ell$ of tree $m$, $j_{b}$ denotes
the coordinate along which $b$ splits, $C_{b}$ denotes the cutpoint
of branch $b$, and $R_{b}$ is the indicator that the path from the
root to the leaf goes right at $b$. A soft decision tree instead
takes 
\[
\varphi_{m\ell}(x)=\prod_{b\in\Ancestors_{m\ell}}\psi(x;C_{b},\tau_{b})^{1-R_{b}}\ \{1-\psi(x;C_{b},\tau_{b})\}^{R_{b}},
\]
where $\psi(x;c,\tau)$ is the cumulative distribution function of
a location-scale family with location $c$ and scale $\tau$. If $\psi(x)=I(x\le0)$
(or, equivalently, as $\tau\to0)$ we get a standard decision tree.
If we instead take $\psi(x)$ to be a smooth function then $r(y,x)$
will also be smooth. The parameter $\tau$ is analogous to a bandwidth
parameter, with larger values of $\tau$ giving smoother functions.
Like \citet{linero2017abayesian} we will take $\psi(x)=(1+e^{-x})^{-1}$
and use tree-specific bandwidths $\tau_{m}$. We write $r\sim\SBART(\pi_{\Tree},\pi_{\sM})$
to denote that $r$ has an SBART prior, where $\pi_{\Tree}$ is now
a prior over the \emph{soft} trees $\Tree_{m}$.

For completeness, we describe the prior over the tree structures we
will use. We assume that each coordinate $x_{j}$ of the predictors
has been scaled to lie in $[0,1${]}. This can be done, for example,
by applying the empirical quantile transform to a subset of the observed
values for each predictor and then interpolating the remaining values.
A tree $\Tree_{m}$ is sampled in the following steps. 
\begin{enumerate}
\item Initialize $\Tree_{m}$ with an single node of depth $D_{m}=0$.
\item For all nodes of depth $D_{m}$, make that node a branch node, with
a left and right child of depth $D_{m}+1$, with probability $\alpha(1+D_{m})^{-\beta}$;
otherwise, make the node a leaf node.
\item For all branch nodes $b$ of depth $D_{m}$, sample the splitting
coordinate $j_{b}\sim\Categorical(s)$ and a splitting point $C_{b}\sim\Uniform(L_{j_{b}},U_{j_{b}})$
where $\prod_{j=1}^{P}[L_{j},U_{j}]$ denotes the hyperrectangle of
$x$-values which lead to node $b$.
\item If all nodes of depth $D_{m}$ are leaf nodes, terminate; otherwise,
set $D_{m}\gets D_{m}+1$ and return to Step 2.
\end{enumerate}
The distribution of the splitting coordinate $j_{b}\sim\Categorical(s)$
determines how relevant a-priori we expect predictor to be; for example,
if $s_{1}=0.99$ we expect most splitting rules to use $x_{1}$, whereas
if $s_{1}=10^{-10}$ we expect none of the splitting rules to use
$x_{1}$. \citet{linero2016bayesian} took advantage of this fact
to perform automatic relevant determination \citep{neal1995bayesian}
for BART models by using a sparsity-inducing Dirichlet prior $s\sim\Dirichlet(a/P,\ldots,a/P)$
for some $a\ll P$. We also use this prior for the splitting proportion,
which will allow us to perform automatic relevance determination in
the density regression setting. This prior is crucial for proving
that the posterior adapts to the presence of irrelevant predictors.

\subsection{The Soft BART Density Sampler}

Our modeling strategy is based on the representation (\ref{eq:main-model})
$f(y\mid x)\propto h(y\mid x,\theta)\ \Phi\{r(y,x)\}$ where $\Phi(\mu)$
is a continuous, non-negative, monotonically increasing \emph{link}
function. Taking $r(y,x)=\Phi^{-1}\{f(y\mid x)/h(y\mid x,\theta)\}$
we see that (\ref{eq:main-model}) is valid whenever $h(y\mid x,\theta)$
and $f(y\mid x)$ have the same support for all $x$.

A naive approach is to set $r(y,x)\sim\BART(\pi_{\Tree},\pi_{\sM})$.
This specification has two problems. First, $r(y,x)$ will not be
smooth in $y$ so that draws from the prior and posterior of $f(y\mid x)$
will also not be smooth. The smoothness problem can be addressed by
setting $r(y,x)\sim\SBART(\pi_{\Tree},\pi_{\sM})$ instead. It is
this model that we study the theoretical properties of in Section
\ref{sec:Theoretical-Results}. 

Setting $r(y,x)\sim\SBART(\pi_{\Tree},\pi_{\sM})$ is still naive
because of the way in which BART shrinks $r(y,x)$ towards additive
models such that $r(y,x)=\sum_{v=1}^{V}r_{v}(y,x)$ where each $r_{v}(y,x)$
depends on a small subset of the coordinates of $(y,x)$ \citep{linero2017abayesian,rockova2017posterior}.
In the regression setting, we often expect that an underlying regression
function will have exactly this form; in the case of sparse additive
models \citep{ravikumar2007spam} for example, each $r_{v}(y,x)$
would depend on exactly one coordinate. This type of shrinkage-towards-additivity
is not appropriate for conditional density estimation due to the distinguished
nature of the response $y$; we instead want the predictors to \emph{interact}
with $y$.

To see why we want to force interactions with $y$, consider the strictly
additive function $r(y,x)=r_{Y}(y)+\sum_{p=1}^{P}r_{p}(x_{p})$. If
we take $\Phi(\mu)=e^{\mu},$ a massive cancellation occurs in (\ref{eq:main-model})
and the model reduces to $f(y\mid x)=h(y\mid x,\theta)\ \Phi\{r_{Y}(y)\}/\int h(\yt\mid x,\theta)\ \Phi\{r_{Y}(\yt)\}\ d\yt$,
effectively eliminating the predictors from the model. More generally,
any trees which do not split on $Y_{i}$ will have no effect on $f(y\mid x)$.
While exact cancellation is unique to the exponential link, it occurs
in an approximate form for the logistic link as well. At the other
extreme, SBART-DS uses a prior which favors utilizing a small number
of coordinates. If $y$ is eliminated, massive cancellation occurs
irrespective of the link function, and gives $f(y\mid x)=h(y\mid x,\theta)$.
Combined with a Dirichlet prior for $s$, this approach encodes prior
information that $f(y\mid x)$ is exactly equal to $h(y\mid x,\theta)$
with high probability.

\subsection{Targeted Smoothing via Random Basis Function Expansions}

We use the ``targeted smoothing'' approach of \citet{starling2018bart}
to overcome the problems of the naive SBART-DS prior. They set $r(y,x)=\gamma+\sum_{m=1}^{M}g(y;x,\Tree_{m},\sM_{m})$
where each leaf node is associated with a Gaussian process; that is,
for fixed $x$, we have $g(y;x;\Tree_{m},\sM_{m})\sim\GP\{0,\Sigma(\cdot,\cdot)\}$
where $\GP\{0,\Sigma(\cdot,\cdot)\}$ denotes a mean-$0$ Gaussian
process with covariance function $\Sigma(y,y')$.

\citet{starling2018bart} consider the case where the number of unique
values $y$ takes, $N_{y}$, is small. When $N_{y}$ is large this
is no longer practical due to the need to store and invert an $N_{y}\times N_{y}$
matrix for all $m$ trees. For SBART-DS we cannot guarantee that this
is the case. Instead, we set $r(y,x)=\gamma+\sum_{m=1}^{M}\Basis_{m}(y)\ g(x;\Tree_{m},\sM_{m})$
where each $\Basis_{m}$ is a random basis function.

To construct our approximation, consider the case where $\Tree_{m}$
us a non-soft decision tree. For fixed $x$ we can write $r(y,x)=\gamma+M^{-1/2}\sum_{m=1}^{M}\mu_{m}\ \Basis_{m}(y)$
where the $\mu_{m}$'s are iid $\Normal(0,\sigma_{\mu}^{2})$ random
variables. Under mild regularity conditions on the distribution of
the $\Basis_{m}$'s, as $M\to\infty$ a functional central limit theorem
will hold and this will converge weakly to a Gaussian process with
mean $\gamma$ and covariance function 
\begin{equation}
\Sigma(y,y')=\sigma_{\mu}^{2}\E\{\Basis_{1}(y)\ \Basis_{1}(y')\}.\label{eq:basis}
\end{equation}
This is the same as the distribution of $r(\cdot,x)$ used by \citet{starling2018bart}.
Rather than directly choose the basis functions $\Basis_{m}$, we
specify $\Sigma(y,y')$ and derive a distribution for $\Basis_{m}$
which matches (\ref{eq:basis}). We make use of the following proposition,
which follows from Bochner's Theorem, and is stated for completeness.
\begin{prop}
\label{prop:fourier} Let $\Sigma(y,y')=\sigma_{\mu}^{2}\delta(y-y')$
be a shift-invariant kernel with $\delta(0)=1$. Then there exists
a distribution $P(d\omega)$ such that $\Sigma(y,y')=\sigma_{\mu}^{2}\E\{2\cos(\omega y+b)\cos(\omega y'+b)\}$
where $\omega\sim P(d\omega)$ and $b\sim\Uniform(0,2\pi)$. Moreover,
$\delta(\cdot)$ is the characteristic function of $P(d\omega)$,
i.e., $\delta(t)=\int\exp\{i\omega t\}\ P(d\omega)$.
\end{prop}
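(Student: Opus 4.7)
The plan is to combine Bochner's theorem with the standard ``random Fourier features'' trigonometric identity, which is exactly the trick of Rahimi and Recht. First I would invoke Bochner's theorem. Since $\Sigma(y,y')$ is a valid (shift-invariant) covariance function, the map $\delta\colon t\mapsto\delta(t)$ is continuous and positive semi-definite, and the normalization $\delta(0)=1$ makes it a normalized characteristic function. Bochner's theorem then guarantees the existence of a probability measure $P(d\omega)$ on $\Reals$ such that $\delta(t)=\int e^{i\omega t}\,P(d\omega)$. Because $\Sigma$ is real-valued and symmetric in its two arguments, $\delta$ is even and real, so we may replace the complex exponential with its real part and write $\delta(y-y')=\int\cos\{\omega(y-y')\}\,P(d\omega)$. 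This establishes the ``moreover'' clause immediately.

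Next I would turn the cosine of a difference into a symmetric product using a uniform phase. The key identity is
\begin{equation*}
2\cos(\omega y+b)\cos(\omega y'+b)=\cos\{\omega(y-y')\}+\cos\{\omega(y+y')+2b\}.
\end{equation*}
Taking $b\sim\Uniform(0,2\pi)$ and integrating, the second term vanishes because $\int_{0}^{2\pi}\cos(A+2b)\,db/(2\pi)=0$ for any constant $A$. Hence
\begin{equation*}
\E_{b}\{2\cos(\omega y+b)\cos(\omega y'+b)\}=\cos\{\omega(y-y')\}.
\end{equation*}

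Finally I would combine the two pieces by Fubini: taking the outer expectation over $\omega\sim P(d\omega)$ and multiplying by $\sigma_{\mu}^{2}$ yields
\begin{equation*}
\sigma_{\mu}^{2}\,\E\{2\cos(\omega y+b)\cos(\omega y'+b)\}=\sigma_{\mu}^{2}\int\cos\{\omega(y-y')\}\,P(d\omega)=\sigma_{\mu}^{2}\,\delta(y-y')=\Sigma(y,y'),
\end{equation*}
which is the claim. There is no real obstacle in the argument; the only technical point worth being explicit about is that applying Bochner implicitly uses that $\delta$ is continuous and positive semi-definite (inherited from $\Sigma$ being a valid covariance), and that $\delta$ is real and even, which is what lets us replace $e^{i\omega t}$ with $\cos(\omega t)$ before introducing the auxiliary phase $b$.
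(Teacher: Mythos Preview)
Your argument is correct and is exactly the standard Rahimi--Recht derivation the paper has in mind. In fact the paper does not give a proof of this proposition at all: it simply states the result, remarks that it ``follows from Bochner's Theorem,'' and cites \citet{rahimi2008random}; your write-up supplies precisely the details that justify that remark.
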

The approach of using random Fourier features in this fashion was
introduced by \citet{rahimi2008random}. It follows from Proposition
\ref{prop:fourier} that we can take $\Basis_{m}(y)=\sqrt{2}\cos(\omega_{m}y+b_{m})$
where $\omega_{m}\iid P(d\omega)$ and $b_{m}\iid\Uniform(0,2\pi)$.
We list some possible choices below.
\begin{itemize}
\item $\omega_{m}\sim\Normal(0,\rho^{-2})$ corresponds to the squared exponential
covariance $\Sigma(y,y')=\sigma_{\mu}^{2}\exp\{-(y-y')^{2}/(2\rho^{2})\}$.
\item Setting $\omega_{m}\sim t_{\nu}$ with location $0$ and scale $\rho^{-1}$
gives the Matern kernel
\[
\Sigma(y,y')=\sigma_{\mu}^{2}\frac{1}{2^{v/2-1}\Gamma(v/2)}\left(\frac{\sqrt{v}|y-y'|}{\rho}\right)^{v/2}K_{v/2}\left(\frac{\sqrt{v}|y-y'|}{\rho}\right)
\]
where $K_{\nu}(\cdot)$ is a modified Bessel function of the second
kind. The exponential kernel $\Sigma(y,y')=\sigma_{\mu}^{2}\exp\{-|y-y'|/\rho\}$
is a special case ($v=1$).
\item In general, $P(d\omega)=p(\omega)\ d\omega$ can be obtained from
the inversion formula $p(\omega)=\frac{1}{2\pi}\int e^{-it\omega}\ \delta(t)\ dt$.
For example, inverting a Cauchy kernel $\delta(t)=\{1+t^{2}/\rho^{2}\}^{-1}$
shows that we can get a Cauchy kernel by sampling from the Laplace
distribution $p(\omega)=\frac{\rho}{2}e^{-\rho|\omega|}$.
\end{itemize}

\subsection{Shrinking Towards the Base Model}

A desirable feature of mixture models is the ability to center the
prior on a parametric submodel. Consider the infinite mixture $\sum_{k=1}^{\infty}\pi_{k}(x)\ h(y\mid x,\theta_{k})$.
If we choose the prior so that $\pi_{1}(x)\approx1$ with high probability,
then we are encoding prior knowledge that $h(y\mid x,\theta)$ is
itself highly likely to give an adequate representation of the data.
For models based on Dirichlet process mixtures $f(x,y)=\int h(x,y\mid\theta)\ F(d\theta)$
with $F\sim\Dirichlet(\alpha,F_{0}$), for example, this can be accomplished
by choosing a prior which shrinks $\alpha$ heavily towards $0$.
We can accomplish a similar goal with SBART-DS. Note that if $\Phi\{r(y,x)\}$
is a constant then it can be canceled in (\ref{eq:main-model}) so
that $f(y\mid x)$ reverts to $h(y\mid x,\theta)$. One approach is
to use a prior which encourages $\sigma_{\mu}^{2}$ to be close to
$0$, so that $\Phi\{r(y,x)\}\approx\Phi(\gamma)$, which is constant.
A second approach is to use a prior which encourages $\gamma$ to
be large and positive, so that $\Phi\{r(y,x)\}\approx1$.

To quantify these observations, let $\Delta_{x}=\sup_{y}|\sum_{m=1}^{M}\Basis_{m}(y)\ g(x;\Tree_{m},\sM_{m})/\sigma_{\mu}|$,
let $\Phi(\mu)$ satisfy Condition L in Section \ref{sec:Theoretical-Results},
and let $H(f,g)$ and $\KL(f,g)$ denote the Hellinger distance and
Kullback-Leibler divergence respectively. Note that the distribution
of $\Delta_{x}$ is free of $\sigma_{\mu}$. An application of Lemma
\ref{lem:compare} shows that $H(f,g)=O_{p}(\sigma_{\mu})$ and $\KL(f,g)=O_{p}(\sigma_{\mu}^{2})$.
It can further be shown that $\KL\{h(y\mid x,\theta),f(y\mid x)\}\le-\log\Phi(\gamma-\sigma_{\mu}\Delta_{x})$,
so that choosing a prior which makes $\gamma-\sigma_{\mu}\Delta_{x}$
large will also make the Kullback-Leibler divergence small; this can
be accomplished by centering $\gamma$ far away from $0$. More quantitative
results might be obtained from concentration inequalities for $\Delta_{x}$,
but we do not pursue this here.

\subsection{Default Prior Specification}

In our illustrations we use the following default prior specification.
Following \citet{chipman2010bart}, we fix the parameters $\alpha=0.95$
and $\beta=2$ in the prior for $\pi_{\Tree}$ and set $\mu_{m\ell}\sim\Normal(0,\sigma_{\mu}^{2}/M)$.
We fix $M=50$; in general, we recommend trying multiple values of
$M$. We set $\sigma_{\mu}\sim\text{Half-Cauchy}(0,1.5)$ to learn
an appropriate value of $\sigma_{\mu}$ from the data. By having mass
near $\sigma_{\mu}=0$, this also allows us to revert to the base
model $h(y\mid x,\theta)$. To induce further shrinkage to the base
model, we set $\gamma\sim\Normal(1,1)$. This has the additional benefit
of making the prior prefer models for which $\Phi\{r(y,x)\}$ is close
to $1$, which reduces the number of latent variables we need to introduce
when fitting the model by MCMC (see Section \ref{sec:Posterior-Computation}).
We use tree-specific bandwidths $\tau_{m}$ which are exponentially
distributed with mean $0.1$. To perform variable selection we specify
$s\sim\Dirichlet(a/P,\ldots,a/P)$ and use a hyperprior $a/(a+P)\sim\Beta(0.5,1)$. 

For targeted smoothing, we approximate the squared exponential kernel
by setting $\omega_{m}\sim\Normal(0,\rho^{-2})$. We set $\rho^{2}\sim\Gam(\alpha_{\rho},\beta_{\rho})$
to allow the length scale to be learned from the data. As a default,
we set $\alpha_{\rho}=1$ and $\beta_{\rho}=\pi^{2}/4$ after scaling
the $Y_{i}$'s to have unit variance. This choice is based on the
expected number of times a Gaussian process with length-scale $\rho$
is expected to cross $0$ on the interval $(-1,1)$: the expected
number of crossings is $2/(\pi\rho)$ so that if $\rho^{2}=4/\pi^{2}$
the expected number of crossings is $1$. Smaller values of $\rho^{2}$
correspond to more wiggly functions. Because our prior has positive
density at $0$, setting $\alpha_{\rho}=1$ allows for the possibility
that the function is very wiggly while defaulting to the prior belief
that it is not.

We choose the base model to be a Gaussian linear model with $h(y\mid x,\theta)=\varphi(y\mid\alpha_{\theta}+x^{\top}\beta_{\theta},\sigma_{\theta})$
where $\varphi(y\mid\mu,\sigma)$ is the density of a Gaussian random
variable with mean $\mu$ and variance $\sigma^{2}$. In our examples
we set $\pi(\alpha_{\theta},\beta_{\theta},\sigma_{\theta})\propto\sigma_{\theta}^{-1}$.
We can also shrink towards a semiparametric Gaussian model by setting
$h(y\mid x,\theta)=\varphi\{y\mid r_{\theta}(x),\sigma_{\theta}\}$
where $r_{\theta}(x)\sim\SBART(\pi_{\Tree}^{\theta},\pi_{\sM}^{\theta})$
and the default prior of \citet{linero2017abayesian} is specified
for $(\pi_{\Tree}^{\theta},\pi_{\sM}^{\theta})$.

\section{Posterior Computation \label{sec:Posterior-Computation}}

\subsection{Rejection Sampling Data Augmentation \label{subsec:Rejection-Sampling-Data}}

We use a two-layer data augmentation scheme which removes both the
intractable integral in the denominator of (\ref{eq:main-model})
and the link function $\Phi(\mu)$ from the likelihood. Our approach
is based on the following method for sampling from $f(y\mid x)$.
\begin{prop}
\label{prop:rejection} Suppose that we sample $Y_{1},Y_{2},Y_{3},\ldots\iid h(y\mid x,\theta)$
and sample $A_{j}\indep\Bernoulli[\Phi\{r(Y_{j},x)\}]$. Let $Z$
denote the $Y_{j}$ associated with the smallest index $J+1$ for
which $A_{J+1}=1$. Then conditional on $\{J,A_{j}:1\le j\le J+1\}$,
$Z$ is a draw from $f(y\mid x)\propto h(y\mid x,\theta)\Phi\{r(y,x)\}$
and $Y_{1},\ldots,Y_{J}$ are draws from $\bar{f}(y\mid x)\propto h(y\mid x,\theta)[1-\Phi\{r(y,x)\}]$.
\end{prop}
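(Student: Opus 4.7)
The plan is to derive the claim directly from the joint distribution of the iid sequence $(Y_j, A_j)$, stratifying on the value of $J$. First I would note that, marginally, each $A_j$ has unconditional success probability
\[
p = \int h(y \mid x, \theta)\, \Phi\{r(y, x)\} \, dy,
\]
obtained by integrating out $Y_j \sim h(\cdot \mid x, \theta)$, and correspondingly $1 - p = \int h(y \mid x, \theta)[1 - \Phi\{r(y, x)\}]\, dy$. Since the pairs $(Y_j, A_j)$ are iid and $J + 1 = \min\{j : A_j = 1\}$, the event $\{J = k\}$ coincides with $\{A_1 = \cdots = A_k = 0, A_{k+1} = 1\}$, which has probability $(1 - p)^k p$; in particular $J$ is geometric.

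Next I would write down the joint law of $(Y_1, \ldots, Y_{k+1})$ on the event $\{A_1 = \cdots = A_k = 0,\, A_{k+1} = 1\}$. Using the conditional independence $A_j \mid Y_j \sim \Bernoulli[\Phi\{r(Y_j, x)\}]$ together with the product structure of the iid sequence, this joint measure has density
\[
\prod_{j=1}^{k} h(Y_j \mid x, \theta)[1 - \Phi\{r(Y_j, x)\}] \cdot h(Y_{k+1} \mid x, \theta)\, \Phi\{r(Y_{k+1}, x)\}
\]
with respect to Lebesgue measure on $\mathbb{R}^{k+1}$. Dividing by the probability $(1 - p)^k p$ of the conditioning event and using the formulas for $p$ and $1 - p$ above, the conditional density of $(Y_1, \ldots, Y_{J+1})$ given $\{J = k\}$ (equivalently, given all the $A_j$'s for $1 \le j \le J+1$) factors as $\prod_{j=1}^{k} \bar{f}(Y_j \mid x) \cdot f(Y_{k+1} \mid x)$, where $f$ and $\bar{f}$ are as in the statement. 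This yields both claims at once: $Y_1, \ldots, Y_J$ are iid from $\bar{f}(\cdot \mid x)$ and $Z = Y_{J+1}$ is an independent draw from $f(\cdot \mid x)$.

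The argument is essentially bookkeeping and no single step is technically difficult. The only point that requires a bit of care is that the conditioning $\sigma$-algebra $\{J, A_j : 1 \le j \le J+1\}$ has random size, which I handle by stratifying on $J = k$: on each stratum the event is deterministic and the factorization above applies verbatim. A second minor point is that the construction presupposes $p > 0$, which follows from the shared-support assumption on $h$ and $f$ noted around (\ref{eq:main-model}) together with positivity of $\Phi$; this makes the geometric distribution of $J$, and hence the conditioning, well defined.
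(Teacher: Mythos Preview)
Your argument is correct and is the standard rejection-sampling computation: stratify on $J=k$, write the joint density of $(Y_1,\ldots,Y_{k+1},A_1,\ldots,A_{k+1})$ from the iid product structure, and divide by $(1-p)^k p$ to obtain the factorized conditional. Note, however, that the paper does not actually supply its own proof of this proposition; it is stated as a known fact underlying the data-augmentation scheme, with a pointer to \citet{rao2016data} for the analogous derivation in the Gaussian-process setting. So there is no ``paper's proof'' to compare against---your write-up simply fills in what the authors left implicit, and does so by exactly the route one would expect.
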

We make use of Proposition \ref{prop:rejection} by augmenting the
latent index $J$ and the sequence of rejected points. Associated
to each observation $Y_{i}=Y_{i0}$ we sample $Y_{ij}\sim h(y\mid X_{i},\theta)$
and $A_{ij}\sim\Bernoulli[\Phi(r(Y_{ij},X_{i})]$ until we reach the
first iteration $J_{i}+1$ such that $A_{i(J_{i}+1)}=1$. We then
work with the augmented state $\{Y_{ij}:1\le i\le N,0\le j\le J_{i}\}$,
which has likelihood 
\begin{equation}
\prod_{i=1}^{N}\prod_{j=0}^{J_{i}}h(Y_{ij}\mid X_{i},\theta)\times\prod_{i=1}^{N}\left(\Phi\{r(Y_{i0},X_{i})\}\prod_{j=1}^{J_{i}}[1-\Phi\{r(Y_{ij},X_{i})\}]\right).\label{eq:data-augment}
\end{equation}
For more details on the derivation of this expression, see \citet{rao2016data},
who consider the GP-DS model. At this stage \citet{rao2016data} propose
the use of Hamiltonian Monte Carlo to sample from the posterior distribution.
This is not an option for us, as the $\Tree_{m}$'s are discrete parameters.

\subsection{Bayesian Backfitting for Probit, Logit, and Student's $t_{\nu}$
Links}

We now apply data augmentation strategy of \citet{albert1993bayesian}.
Suppose that $\Phi(\mu)$ is cdf of either the probit, logit, or Student's
$t_{\nu}$ link. We can then associate to each $A_{ij}$ from Section
\ref{subsec:Rejection-Sampling-Data} a random variable 
\[
Z_{ij}=r(Y_{ij},X_{i})+\epsilon_{ij},\qquad\epsilon_{ij}\sim\Normal(0,\lambda_{ij}^{-1}),\qquad\lambda_{ij}\sim g(\lambda).
\]
Setting $A_{ij}=I(Z_{ij}\ge0)$ recovers the $\Bernoulli[\Phi\{r(Y_{ij},X_{i})\}]$
model. This model captures the three links we consider:
\begin{itemize}
\item For the probit link, $\lambda_{ij}^{-1}$ has a point-mass distribution
at $1$. 
\item When $\Phi(\mu)=T_{\nu}(\mu)$ is the Student's $t$ link with $\nu$
degrees of freedom, $\lambda_{ij}\sim\Gam(\nu/2,\nu/2)$. 
\item When $\Phi(\mu)=(1+e^{-\mu})^{-1}$ is the logistic link, $\lambda_{ij}^{-1/2}/2$
has a Kolmogorov-Smirnov distribution \citep{holmes2006bayesian}.
\end{itemize}
Compared to (\ref{eq:data-augment}), introducing the latent variables
$(Z_{ij},\lambda_{ij})$ leads to a more tractable likelihood:
\begin{equation}
\prod_{i=1}^{N}\prod_{j=0}^{J_{i}}h(Y_{ij}\mid X_{i},\theta)\times\Normal\{Z_{ij}\mid r(Y_{ij},X_{i}),\lambda_{ij}^{-1}\}\times g(\lambda_{ij}).\label{eq:final-augment}
\end{equation}
After reaching expression (\ref{eq:final-augment}) we can apply a
Bayesian backfitting algorithm to update $r(y,x)$. While the Bayesian
backfitting algorithm originally proposed by \citet{chipman2010bart}
does not account for heteroskedasticity in the $Z_{ij}$'s, several
recent works have shown how to accommodate this \citep{bleich2014bayesian,pratola2017heteroscedastic,linero2018shared}.
Consider the prior $\gamma\sim\Normal(\mu_{\gamma},\lambda_{\gamma}^{-1})$
and let $R_{ij}=Z_{ij}-\gamma-\sum_{m\ne k}\Basis_{m}(Y_{ij})\ g(X_{i};\Tree_{m},\sM_{m})$.
When updating $\Tree_{k}$ it suffices to consider the backfit model
\begin{equation}
R_{ij}=\Basis_{k}(Y_{ij})\ g(X_{i};\Tree_{k},\sM_{k})+\epsilon_{ij},\qquad\epsilon_{ij}\sim\Normal(0,\lambda_{ij}^{-1}),\label{eq:backfit-model}
\end{equation}
where recall that $\varphi_{k\ell}(X_{i})$ is the weight associated
to leaf $\ell$ of tree $k$ at $X_{i}$. Let $\varphi_{k}(X_{i})$
be a vector with $\ell^{\text{th}}$ entry $\varphi_{k\ell}(X_{i})$.
Then we can rewrite (\ref{eq:backfit-model}) as $R_{ij}=\Basis_{k}(Y_{ij})\ \varphi_{k}(X_{i})^{\top}\mu_{k}+\epsilon_{ij}$
or, in multivariate form, $\bm{R}\sim\Normal(\Basis_{k}\mu,\Lambda^{-1})$
where the rows of $\Basis_{k}$ correspond to $\Basis_{k}(Y_{ij})\varphi_{k}(X_{i})^{\top}$
and $\Lambda$ is diagonal with entries $\lambda_{ij}$. If $\mu_{k}\sim\Normal(0,\lambda_{\mu}^{-1})$
where $\lambda_{\mu}=M/\sigma_{\mu}^{2}$ then it follows from standard
properties of the multivariate Gaussian distribution that

\begin{equation}
\begin{aligned}\bm{R}\sim\Normal(0,\Lambda^{-1}+\Basis_{k}\Basis_{k}^{\top}/\lambda_{\mu})]\qquad & \text{and}\\{}
[\mu_{k}\mid\bm{R}]\sim\Normal(V\Basis_{k}^{\top}\Lambda\bm{R},V)\qquad & \text{where }V=(\Basis_{k}^{\top}\Lambda\Basis_{k}+\lambda_{\mu}\Identity)^{-1}.
\end{aligned}
\label{eq:marginal-fc}
\end{equation}
After applying the Woodbury matrix identity and the matrix determinant
lemma \citep[matrix identities]{brookes2011matrix}, the likelihood
of $\Tree_{k}$ after integrating out $\mu_{k}$ is given by 
\begin{equation}
(2\pi)^{-N/2}\prod_{i=1}^{N}\lambda_{i}^{1/2}\det(\Identity+\Basis_{k}^{\top}\Lambda\Basis_{k}/\lambda_{\mu})\ \exp\left[-\frac{1}{2}\left\{ \bm{R}^{\top}\Lambda\bm{R}-\delta^{\top}(\Identity+\Basis_{k}^{\top}\Lambda\Basis_{k}/\lambda_{\mu})^{-1}\delta\right\} \right]\label{eq:marg-lwoodbury}
\end{equation}
where $\delta=\Basis_{k}^{\top}\Lambda\bm{R}$. The value of (\ref{eq:marg-lwoodbury})
is that it avoids taking the determinant of and inverting the $N\times N$
matrix $\Lambda^{-1}+\Basis_{k}\Basis_{k}^{\top}/\lambda_{\mu}$.
The marginal likelihood $L_{k}(\Tree,\Basis)$ given by (\ref{eq:marg-lwoodbury})
is used to update both the tree topology $\Tree_{k}$ and the random
basis function $\Basis_{k}(y)$ using Metropolis-Hastings.

Our final MCMC scheme is summarized in Algorithm \ref{alg:DA}, which
calls Algorithm \ref{alg:Metropolis} to update $(\Tree_{k},\Basis_{k},\sM_{k})$.
The Markov transition function $Q(\Tree_{k}\to\Tree')$ used to propose
new tree topologies is a mixture of the \texttt{BIRTH}, \texttt{DEATH},
and \texttt{CHANGE} proposals described by \citet{chipman1998bayesian}
and a $\texttt{PRIOR}$ proposal which samples $\Tree'$ from the
prior.

\begin{algorithm}
\caption{An iteration of the data augmentation algorithm for SBART-DS}

\begin{enumerate}
\item For $i=1,\ldots,N$, set $Y_{i0}=Y_{i}$ and sample $Y_{i1},Y_{i2},\ldots\sim h(y\mid X_{i},\theta)$
and $A_{i1},A_{i2},\ldots\sim\Bernoulli[\Phi\{r(Y_{ij},X_{i})\}]$
until $A_{i(J_{i}+1)}=1$. Retain the samples $Y_{i0},\ldots,Y_{iJ_{i}}$.
\item Make an update to $\theta$ which leaves the full conditional $\pi(\theta\mid-)\propto\pi(\theta)\prod_{i,j}h(Y_{ij}\mid X_{i},\theta)$
invariant. 
\item Sample $Z_{ij}\sim f(z\mid\mu_{ij})$ truncated to $(0,\infty)$ for
$j=0$ and $(-\infty,0)$ for $j>0$ where $\mu_{ij}=r(Y_{ij},X_{i})$
and $f(z\mid\mu_{ij})$ is a normal, logistic, or Student's $t_{\nu}$
distribution with location $\mu_{ij}$ and scale $1$ for the probit,
logit, and $T_{\nu}$ links respectively. 
\item Sample $\lambda_{ij}$ from its full conditional given $Z_{ij}$ for
all $1\le i\le N$ and $0\le j\le J_{i}$.
\begin{itemize}
\item For the probit link, $\lambda_{ij}\equiv1$. 
\item For the Student's $t_{\nu}$ link, $\lambda_{ij}\sim\text{Gam}\{(\nu+1)/2,(\nu+[Z_{ij}-r(Y_{ij},X_{i})]^{2})/2\}.$
\item For the logit link, sample $\lambda_{ij}^{-1}$ using the rejection
sampling algorithm of \citet{holmes2006bayesian}.
\end{itemize}
\item For $m=1,\ldots,M$ update $(\Tree_{m},\sM_{m},\Basis_{m})$ using
the Metropolis-Hastings algorithm given in Algorithm \ref{alg:Metropolis}.
\end{enumerate}
\label{alg:DA}
\end{algorithm}

\begin{algorithm}

\caption{Metropolis-Hastings update for $(\protect\Tree_{k},\protect\sM_{k},\protect\Basis_{k})$}

\begin{enumerate}
\item Compute $\bm{R}$ as in (\ref{eq:backfit-model}). 
\item Propose a tree $\Tree'$ from a Markov transition kernel $Q(\Tree_{k}\to\Tree')$. 
\item Set $\Tree_{k}=\Tree'$ with probability 
\[
\min\left\{ \frac{\pi_{\Tree}(\Tree')\ L_{k}(\Tree',\Basis_{k})\ Q(\Tree'\to\Tree_{k})}{\pi_{\Tree}(\Tree_{k})\ L_{k}(\Tree_{k},\Basis_{k})\ Q(\Tree_{k}\to\Tree')},1\right\} .
\]
Otherwise, do not change $\Tree_{k}$. 
\item Sample a basis function $\Basis'(y)=\sqrt{2}\cos(\omega'y+b')$ by
sampling $\omega'\sim P(d\omega)$ and $b'\sim\Uniform(0,2\pi)$.
Then set $\Basis_{k}=\Basis'$ with probability 
\[
\min\left\{ \frac{L_{k}(\Tree_{k},\Basis')}{L_{k}(\Tree_{k},\Basis_{k})},1\right\} .
\]
Otherwise, do not change $\Basis_{k}$. 
\item Sample $\mu_{k}\sim\Normal(V\Basis_{k}^{\top}\Lambda\bm{R},V)$ where
$V=(\Basis_{k}^{\top}\Lambda\Basis_{k}+\Identity/\lambda_{\mu})^{-1}$
and $\lambda_{\mu}=M/\sigma_{\mu}^{2}$.
\end{enumerate}
\label{alg:Metropolis}

\end{algorithm}

\section{Theoretical Results \label{sec:Theoretical-Results}}

We show that SBART-DS attains close to the minimax-optimal concentration
rate for $(P+1)$-dimensional functions $r(y,x)$ in the high-dimensional
sparse setting. All proofs are deferred to the appendix. We consider
the case where $r(y,x)$ depends on only $D$ coordinates of $(y,x)^{\top}$
where the relevant subset is unknown and must be learned from the
data. Following \citet{pati2013posterior} we study concentration
with respect to the integrated Hellinger distance. Let $H(f,f_{0})=\{\int(\sqrt{f_{0}(y\mid x})-\sqrt{f(y\mid x})^{2}\ dy\ F_{X}(dx)\}^{1/2}$
denote the \emph{$F_{X}$-integrated Hellinger distance} between $f_{0}(y\mid x)$
and $f(y\mid x)$. The covariates $X_{i}$ are assumed to be iid from
$F_{X}$, which is not assumed to be known. We similarly define an
\emph{$F_{X}$-integrated Kullback-Leibler} neighborhood. Define $K(f_{0},f)=\int f_{0}\log\frac{f_{0}}{f}\ dy\ dF_{X}$
and $V(f_{0},f)=\int f_{0}\left(\log\frac{f_{0}}{f}\right)^{2}\ dy\ dF_{X}$.
Then the integrated Kullback-Leibler neighborhood is given by $K(\epsilon)=\left\{ f:K(f_{0},f)\le\epsilon^{2}\ \text{and }V(f_{0},f)\le\epsilon^{2}\right\} .$
Let $\Data_{n}$ denote the data $\{X_{i},Y_{i}:i=1,\ldots,n\}$ and
let $\Pi$ denote a prior distribution on $r$ and additional hyperparameters.
We say that the posterior has a convergence rate of at least $\epsilon_{n}$
if there exists a constant $C>0$ such that $\Pi\{H(f_{r},f_{0})\ge C\epsilon_{n}\mid\Data_{n}\}\to0$
in probability. To simplify the theoretical results, we assume that
$X_{i}$ and $Y_{i}$ take values in $[0,1]^{P+1}$. We additionally
make the following assumptions about the true data generating process
$F_{0}$.

\paragraph{Condition F (on $F_{0}$):}

The true conditional density $f_{0}(y\mid x)$ can be written as $f_{r_{0}}(y\mid x)$
for some $r_{0}\in C^{\alpha,R}([0,1]^{P+1})$ where $C^{\alpha,R}([0,1])$
is the ball of radius $R$ in the space of $\alpha$-H\"older smooth
functions on $[0,1]^{P+1}$, where $f_{r}(y\mid x)$ is defined as
\[
f_{r}(y\mid x)=\frac{h(y)\ \Phi\{r(y,x)\}}{\int h(\widetilde{y})\ \Phi\{r(\widetilde{y},x)\}\ d\widetilde{y}}
\]
for some density $h(y)$ on $[0,1]$. Additionally, we can write $r_{0}(y,x)=\widetilde{r}(y,x_{\mathcal{S}})$
where $x_{\mathcal{S}}=\{x_{j}:j\in\mathcal{S}\}$ and $\mathcal{S}$
is a subset of $\{1,\ldots,P\}$ of cardinality $D-1$. That is, $r_{0}(y,x)$
depends on at most $D$ coordinates of $(y,x)^{\top}$. The number
of predictors $P\equiv P_{n}$ depends on $n$ but is such that $\log(P+1)\le C_{\eta}n^{\eta}$
for some $\eta\in(0,1)$.
\begin{rem}
For simplicity, we consider $\widetilde{r}$ and $\mathcal{S}$ to
be independent of $n$; in particular, we do not consider $D$ diverging
with $n$. There exists \emph{some} $r_{0}$ such that $f_{0}=f_{r_{0}}$
by taking $r_{0}(y\mid x)=\Phi^{-1}\{f_{0}(y\mid x)/h(y)\}$, provided
that $h(y)$ and $f_{0}(y\mid x)$ have common support. When $h(y)=1$,
the assumption that $r_{0}$ is continuous on $[0,1]^{P+1}$ implies
that $C^{-1}\le f_{0}(y\mid x)\le C$ for some constant $C$, i.e.,
$f_{0}(y\mid x)$ is bounded and bounded away from $0$.
\end{rem}

\paragraph{Condition L (on $\Phi)$:}

The link function $\Phi(\mu)$ is strictly increasing and is the cumulative
distribution function of a random variable $Z$ which is symmetric
about $0$ and has density $\phi(\mu)$ satisfying $\phi(\mu)/\Phi(\mu)\le\mathcal{K}$
for all $\mu$ and some constant $\mathcal{K}$.
\begin{rem}
We show in the appendix that Condition L holds for the logit ($\mathcal{K}=1$)
and $t_{\nu}$ ($\mathcal{K}=\sqrt{\nu}$) links, but fails for the
probit link. 
\end{rem}

\paragraph{Condition P (on $\Pi$):}

The function $r$ is given an $\text{SBART}(\pi_{\Tree},\pi_{\sM})$
prior with $M$ trees, conditional on $(\pi_{\Tree},\pi_{\sM},M)$.
Additionally, the prior $\Pi$ satisfies the following conditions.
\begin{enumerate}
\item [(P1)] There exists positive constants $(C_{M1},C_{M2})$ such that
the prior on the number of trees $M$ in the ensemble is $\Pi(M=t)=C_{M1}\exp\{-C_{M2}t\log t\}$.
\item [(P2)] A single bandwidth $\tau_{m}\equiv\tau$ is used and its prior
satisfies $\Pi(\tau\ge x)\le C_{\tau1}\exp(-x^{C_{\tau2}})$ and $\Pi(\tau^{-1}\ge x)\le C_{\tau3}\exp(-x^{C_{\tau4}})$
for some positive constants $C_{\tau1},\ldots,C_{\tau4}$ for all
sufficiently large $x$, with $C_{\tau2},C_{\tau4}<1$. Moreover,
the density of $\tau^{-1}$ satisfies $\pi_{\tau^{-1}}(x)\ge C_{\tau5}e^{-C_{\tau6}x}$
for large enough $x$ and some positive constants $C_{\tau5}$ and
$C_{\tau6}$.
\item [(P3)] The prior on the splitting proportions is $s\sim\text{Dirichlet}(a/P^{\xi},\ldots,a/P^{\xi})$
for some $\xi>1$ and $a>0$. 
\item [(P4)] The $\mu_{m\ell}$'s are iid from a density $\pi_{\mu}(\mu)$
such that $\pi_{\mu}(\mu)\ge C_{\mu1}e^{-C_{\mu2}|\mu|}$ for some
coefficients $C_{\mu1},C_{\mu2}$. Additionally, there exists constants
$C_{\mu3},C_{\mu4}$ such that $\Pi(|\mu_{m\ell}|\ge t)\le C_{\mu3}\exp\{-t^{C_{\mu4}}\}$
for all $t$. 
\item [(P5)] Let $D_{m}$ denote the depth of tree $\Tree_{m}$. Then $\Pi(D_{m}=k)>0$
for all $k=0,1,\ldots,2D$ and $\Pi(D_{m}>d_{0})=0$ for some $d_{0}\ge D$.
\item [(P6)] The gating function $\psi:\Reals\to[0,1]$ of the SBART prior
is such that $\sup_{x}|\psi'(x)|<\infty$ and the function $\rho(x)=\psi(x)\{1-\psi(x)\}$
is such that $\int\rho(x)\ dx>0$, $\int|x|^{m}\rho(x)\ dx<\infty$
for all integers $m\ge0$, and $\rho(x)$ can be analytically extended
to some strip $\{z:|\Im(z)|\le U\}$ in the complex plane.
\end{enumerate}
\begin{rem}
Conditions other than Condition P might also be used. Recent work
of \citet{rockova2017posterior}, for example, studies concentration
results for BART using different sets of conditions, and the conditions
overall are weaker than the conditions presented here. A downside
of these results is they apply only when non-smooth decision trees
are used, which induces non-smooth densities. Condition P2 holds when
$\tau$ is given an inverse-gamma prior truncated from above, while
Condition P4 holds when the $\mu_{m\ell}$'s are given a Laplace prior,
although as noted by \citet{linero2017abayesian} this could potentially
be weakened to allow a Gaussian prior with a hyperprior on $\sigma_{\mu}$
(we do not pursue this here). Condition P6 holds for the logistic
gating function $\psi(x)=\{1+\exp(-x)\}^{-1}$, which is used by default.
Condition P5 holds if we truncate the prior of \citet{chipman2010bart}
at some large $d_{0}$, which is extremely unlikely to affect the
MCMC in practice. Hence, satisfying P5 is not a practical concern.
Condition P1 is problematic because BART implementations do not use
a prior on $M$. In practice, we find that selecting $M$ by cross
validation is more reliable than using a prior; we recommend either
(i) using cross validation to select $M$ or (ii) fixing $M$ at a
default value such as $M=200$ (recommended by \citealp{chipman2010bart})
or $M=50$ (used here).
\end{rem}
\begin{thm}
\label{thm:sparse}Suppose that Condition L, Condition F, and Condition
P hold. Then there exists a positive constant $C$ such that $\Pi\{H(f_{0},f_{r})\ge C\epsilon_{n}\mid\mathcal{D}_{n}\}\to0$
in probability, where $\epsilon_{n}=n^{-\alpha/(2\alpha+D)}(\log n)^{t}+\sqrt{\frac{D\log(P+1)}{n}}$
and $t=\alpha(D+1)/(2\alpha+D)$.
\end{thm}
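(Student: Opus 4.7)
The plan is to apply the standard testing-and-sieve framework for posterior concentration in density estimation (Ghosal--van der Vaart, specialized to conditional densities as in \citet{pati2013posterior}), translated through Lemma~\ref{lem:compare} into statements about the tree-ensemble function $r$. Because Condition L together with Lemma~\ref{lem:compare} gives bounds of the form $H(f_r,f_{r_0})\lesssim \|r-r_0\|_\infty$ and analogous control on $K(f_{r_0},f_r),V(f_{r_0},f_r)$, it suffices to verify the three standard conditions at rate $\epsilon_n$ with the \emph{sup-norm} on $r$: (i) a prior-mass lower bound on sup-norm neighborhoods $\Pi(\|r-r_0\|_\infty\le \epsilon_n)\ge e^{-c n\epsilon_n^2}$, (ii) a sieve $\Sieve_n$ with metric entropy $\log N(\epsilon_n,\Sieve_n,\|\cdot\|_\infty)\lesssim n\epsilon_n^2$, and (iii) exponentially small prior mass on $\Sieve_n^c$.

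For the prior-mass bound I would adapt the SBART approximation machinery of \citet{linero2017abayesian} (and its sparse refinements along the lines of \citealp{rockova2017posterior}) to the $(P+1)$-dimensional setting: since $r_0=\widetilde r(y,x_{\mathcal S})$ is $\alpha$-H\"older on a $D$-dimensional coordinate subset, there exists an ensemble of $M_n\asymp \epsilon_n^{-D/\alpha}$ soft trees, each of depth $\lesssim \log_2 M_n$, splitting only on coordinates in $\{0\}\cup\mathcal S$ (where $0$ indexes $y$), with leaf values bounded in absolute value, which $\epsilon_n$-approximates $r_0$ uniformly on $[0,1]^{P+1}$; Condition P6 (in particular the analytic extension and moment conditions on $\rho$) together with a suitable choice of $\tau$ makes the soft-tree partition-of-unity behave like a standard partition up to an $\epsilon_n$ error absorbed into the tree count. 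Given this approximating ensemble, I would lower-bound the prior mass on a uniform neighborhood of it as a product: Condition P1 gives $\Pi(M=M_n)\ge e^{-c M_n\log M_n}\ge e^{-c' n\epsilon_n^2}$; the sparse Dirichlet of Condition P3, by a standard small-ball computation, contributes $e^{-c''D\log(P+1)}$ after integrating over subsets containing $\mathcal S\cup\{0\}$ (this is precisely where the $\sqrt{D\log(P+1)/n}$ term enters $\epsilon_n$); P2 gives polynomial-in-$\epsilon_n$ mass on a neighborhood of the required $\tau$; and P4 gives $e^{-c''' M_n(\log M_n)\cdot\log(1/\epsilon_n)}$ for a tube of radius $\epsilon_n/M_n$ around each leaf parameter. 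Summing the exponents reproduces $n\epsilon_n^2$ with $\epsilon_n$ as in the theorem.

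For the sieve, define $\Sieve_n$ as ensembles with $M\le M_n^\star$ trees, depth $\le d_0$, leaf values bounded by $B_n=(\log n)^{c_1}$, bandwidth $\tau\in[\tau_n^{-},\tau_n^{+}]$ where $\tau_n^{\pm}$ grow polylog, and splitting proportions constrained so that only $\lesssim D$ coordinates carry nontrivial mass; by Conditions P1--P4 (the tails on $M$, $\tau$, $\tau^{-1}$, and $\mu_{m\ell}$) and the concentration of the sparse Dirichlet on low-dimensional faces, $\Pi(\Sieve_n^c)$ is exponentially small in $n\epsilon_n^2$. The entropy estimate is the technical engine: conditioning on the topology, smooth-tree functions are $\sup_x|\psi'|$-Lipschitz in the cutpoints and leaf values (this is exactly where P6 enters), so a standard bracketing argument bounds $N(\epsilon,\Sieve_n,\|\cdot\|_\infty)$ by a polynomial in $1/\epsilon$ per fixed topology and coordinate subset; multiplying by $\binom{P+1}{D}\le (P+1)^D$ topologies-with-subset and taking logs yields $\log N\lesssim M_n^\star \log(n)+D\log(P+1)\lesssim n\epsilon_n^2$.

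The main obstacle is the interaction between soft-tree approximation and sparsity: the approximation results for SBART in the literature are stated in the dense regime, and one must verify that trees restricted to split on the correct $D$-subset still achieve the $n^{-\alpha/(2\alpha+D)}$ rate, while simultaneously ensuring that the sparse Dirichlet in Condition P3 puts enough mass on exactly these trees to contribute only a $\sqrt{D\log(P+1)/n}$ cost rather than, say, $\sqrt{P\log(P+1)/n}$. This is where the exponent $\xi>1$ in P3 is essential, as it kills the combinatorial $(P+1)^D$ factor in the prior mass lower bound. Once these pieces are aligned, assembling them via \citet{pati2013posterior} Theorem~3.1 (or an analogous statement) and converting back to the Hellinger distance on $f_r$ via Lemma~\ref{lem:compare} yields the stated rate.
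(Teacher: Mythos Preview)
Your proposal is correct and follows essentially the same route as the paper: reduce to the Ghosal--van der Vaart conditions on the conditional density, use Lemma~\ref{lem:compare} to transfer these to sup-norm conditions on $r$, then verify prior mass, entropy, and sieve-complement bounds for the SBART prior under Conditions~P1--P6. The paper's proof differs only in presentation: it packages the SBART-specific steps as black-box citations of \citet{linero2017abayesian} (Propositions~\ref{prop:LY-1} and~\ref{prop:LY-2} here) rather than re-deriving the approximation and small-ball arguments you sketch, and it takes the sieve parameter $d=\lfloor\kappa n\epsilon_n^2/\log(P+1)\rfloor$ (growing, constraining the actual splitting directions of the trees) rather than a fixed $\lesssim D$ constraint on the splitting proportions $s$ as you describe; the growing $d$ is what makes the complement-probability bound go through cleanly, so you would want to adjust that detail.
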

We prove Theorem \ref{thm:sparse} by checking (a)---(c) in Proposition
\ref{prop:sc}, which are analogous to conditions of \citet{ghosal2000convergence}.
\begin{prop}
\label{prop:sc} Let $\Pi$ denote a prior for a conditional density
$f(y\mid x)$ and let $\epsilon_{n}$ and $\bar{\epsilon}_{n}$ be
sequences of positive numbers such that $\bar{\epsilon}_{n},\epsilon_{n}\to0$,
$n\epsilon_{n}^{2}\to\infty$, and $\epsilon_{n}\le\bar{\epsilon}_{n}$.
Let $N(\epsilon,\mathcal{F},H)$ denote the $\epsilon$-covering number
of $\mathcal{F}$ with respect to $H$ (i.e., the number of balls
of radius $\epsilon$ required to cover $\mathcal{F}$). Suppose that
there exists positive constants $C,C_{N}$ such that for all sufficiently
large $n$ there exist sets of conditional densities $\Sieve_{n}$
satisfying the following conditions:
\begin{enumerate}
\item [(a)] Entropy Bound:\emph{ }$\log N(\bar{\epsilon}_{n},\Sieve_{n},H)\le C_{N}n\bar{\epsilon}_{n}^{2}$.
\item [(b)] Support Condition: $\Pi(\Sieve_{n}^{c})\le\exp\{-(C+4)n\epsilon_{n}^{2}\}$.
\item [(c)] Prior Thickness:\emph{ }$\Pi\{f\in K(\epsilon_{n})\}\ge\exp(-Cn\epsilon_{n}^{2})$.
\end{enumerate}
Then $\Pi\{H(f_{0},f)\ge A\bar{\epsilon}_{n}\mid\Data_{n}\}\to0$
in probability for some constant $A>0$.
\end{prop}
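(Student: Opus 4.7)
The plan is to reduce Proposition \ref{prop:sc} to the classical iid posterior contraction theorem of \citet{ghosal2000convergence} by passing from conditional densities to joint densities. Fix $q(x)$ to be the (unknown, nuisance) marginal density $f_X(x)$, and define joints $p_0(x,y) = f_0(y \mid x)\, q(x)$ and $p_f(x,y) = f(y \mid x)\, q(x)$. A direct calculation gives the isometry
\[
H(p_0, p_f) = H(f_0, f), \qquad K(p_0, p_f) = K(f_0, f), \qquad V(p_0, p_f) = V(f_0, f),
\]
where the left-hand sides are the ordinary Hellinger, KL, and $V$ quantities between joints and the right-hand sides are the $F_X$-integrated versions appearing in the proposition. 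Moreover, $q$ cancels from every likelihood ratio $\prod_i p_f(X_i, Y_i)/p_0(X_i, Y_i) = \prod_i f(Y_i \mid X_i)/f_0(Y_i \mid X_i)$, so the posterior on $f$ given $\Data_n$ is identical to the posterior on $p_f$ viewed as a prior on joints with marginal held fixed at $f_X$. It therefore suffices to verify the standard GGV hypotheses on the joint-density side.

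I would then execute the three-step GGV machinery. First, using (a), the Le Cam/Birgé testing lemma constructs, for each $f_1 \in \Sieve_n$ with $H(f_0, f_1) \ge A \bar\epsilon_n$, an exponentially-consistent test separating $\{f_0\}$ from an $H$-ball around $f_1$; aggregating these tests over an $\bar\epsilon_n$-net of $\Sieve_n$ (of log-size $\le C_N n \bar\epsilon_n^2$ by (a)) yields a global test $\phi_n$ with $\E_0 \phi_n \to 0$ and alternative error uniformly bounded by $e^{-c n \bar\epsilon_n^2}$ on $\{f \in \Sieve_n : H(f_0, f) \ge A \bar\epsilon_n\}$; the isometry above ensures the Hellinger-affinity inequalities used in the lemma transfer verbatim. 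Second, prior thickness (c) feeds into the standard evidence lower bound: on an event of $P_0$-probability tending to one,
\[
\int \prod_{i=1}^n \frac{f(Y_i \mid X_i)}{f_0(Y_i \mid X_i)} \, d\Pi(f) \ge \Pi\{K(\epsilon_n)\}\, e^{-2 n \epsilon_n^2} \ge \exp\{-(C+2) n \epsilon_n^2\},
\]
obtained via Chebyshev on the sum of log-likelihood ratios with variance bounded through the $V$-condition. Third, I split the posterior numerator $\int_{H(f_0,f)\ge A\bar\epsilon_n} \prod_i (f/f_0)\, d\Pi$ into its restriction to $\Sieve_n$ (controlled by $1-\phi_n$ together with the uniform alternative error) and its restriction to $\Sieve_n^c$ (controlled by (b) via $\Pi(\Sieve_n^c) \le e^{-(C+4) n \epsilon_n^2}$). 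Dividing by the evidence lower bound and invoking $\epsilon_n \le \bar\epsilon_n$ and $n \epsilon_n^2 \to \infty$, all three contributions tend to zero in probability, which gives the desired conclusion.

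The main obstacle at first glance is the test-construction step, but the isometry collapses it to the standard argument: Hellinger affinities on joints with common marginal $f_X$ factor through $f_X$ and recover the integrated affinity, so Le Cam's lemma applies without modification. The only other routine check is that the entropy bound transfers, and this is immediate because $f \mapsto p_f = f \cdot f_X$ is a Hellinger isometry and therefore preserves $\epsilon$-coverings. Consequently, Proposition \ref{prop:sc} is essentially a reformulation of the GGV theorem in conditional-density form; the substantive work in Theorem \ref{thm:sparse} lies entirely in verifying (a)--(c) for the SBART-DS prior, not in Proposition \ref{prop:sc} itself.
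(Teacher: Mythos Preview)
Your proposal is correct and follows essentially the same route as the paper: both reduce to the iid joint-density setting by fixing the covariate marginal at $F_X$, observe that the integrated Hellinger/KL/V quantities coincide with the ordinary ones for the resulting joints, and then invoke the Ghosal--Ghosh--van der Vaart theorem. The paper simply cites the variant of Theorem~2.1 of \citet{ghosal2000convergence} appearing in \citet{shen2013adaptive} rather than sketching the test-construction and evidence-lower-bound steps you spell out, but the substance is identical.
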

The proof that our SBART prior satisfies these conditions is similar
to the proof of Theorem 3.1 of \citet{van2008rates}, who established
posterior convergence rates for density estimation using logistic
Gaussian processes. We use a collection of results of \citet{linero2017abayesian},
who established results similar to (a)---(c) for a regression function
$r\sim\text{SBART}(\pi_{\Tree},\pi_{\sM})$ with respect to the supremum
norm $\|r-r_{0}\|_{\infty}=\sup_{x,y}|r(y,x)-r_{0}(y,x)|$. We then
use the following lemma, which links the supremum-norm neighborhoods
of $r_{0}$ with the integrated Hellinger and Kullback-Leibler neighborhoods
of $f_{0}$; this allows us to convert results about the $\|\cdot\|_{\infty}$-norm
neighborhoods to results about the integrated neighborhoods. This
lemma is similar to Lemma 3.1 of \citet{van2008rates}, but with exponential
link $\Phi(\mu)=e^{\mu}$ replaced with a link satisfying Condition
L.
\begin{lem}
\label{lem:compare} Let $\Phi(\mu)$ be a link function satisfying
Condition L. Then for any measurable functions $u,v:[0,1]^{P+1}\to\mathbb{R}$
we have the following:
\begin{itemize}
\item $H^{2}(f_{u},f_{v})\le\mathcal{K}^{2}\|u-v\|_{\infty}^{2}\exp(\mathcal{K}\|u-v\|_{\infty})$;
\item $K(f_{u},f_{v})\lesssim\|u-v\|_{\infty}^{2}\exp(\mathcal{K}\|u-v\|_{\infty})(1+2\mathcal{K}\|u-v\|_{\infty})$;
and
\item $V(f_{u},f_{v})\lesssim\|u-v\|_{\infty}^{2}\exp(\mathcal{K}\|u-v\|_{\infty})(1+2\mathcal{K}\|u-v\|_{\infty})^{2}$. 
\end{itemize}
The expression $a\lesssim b$ here denotes that $a\le Cb$ for some
constant $C$ depending only on $\mathcal{K}$. 
\end{lem}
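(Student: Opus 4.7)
The plan is to reduce the three quantities to Taylor-expansion bounds on a scalar cumulant generating function of a bounded random variable. The key input from Condition L is that $(\log\Phi)'(\mu)=\phi(\mu)/\Phi(\mu)\le\mathcal{K}$, so $\log\Phi$ is globally $\mathcal{K}$-Lipschitz. Writing $\delta=\|u-v\|_\infty$ and $A(y,x)=\log\Phi(u(y,x))-\log\Phi(v(y,x))$, this gives $\|A\|_\infty\le\mathcal{K}\delta$ immediately. Since the base density $h(y)$ cancels, the two conditional densities are related by a multiplicative tilt,
\[
f_u(y\mid x)=f_v(y\mid x)\,\frac{e^{A(y,x)}}{m(x)},\qquad m(x)=E_{f_v(\cdot\mid x)}\!\bigl[e^{A(\cdot,x)}\bigr],
\]
with $m(x)\in[e^{-\mathcal{K}\delta},e^{\mathcal{K}\delta}]$, so that the log-density ratio is controlled pointwise: $|\log(f_u/f_v)|=|A-\log m|\le 2\mathcal{K}\delta$.

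I would then work with the cumulant generating function
\[
l_x(t)=\log E_{f_v(\cdot\mid x)}\!\bigl[e^{tA(\cdot,x)}\bigr],\qquad t\in\mathbb{R},
\]
which satisfies $l_x(0)=0$, is convex in $t$, and has $l_x''(t)=\mathrm{Var}_{Q_{x,t}}(A(\cdot,x))\le\|A\|_\infty^2\le\mathcal{K}^2\delta^2$ for the exponentially tilted measure $Q_{x,t}\propto f_v(\cdot\mid x)\,e^{tA(\cdot,x)}$ (so $Q_{x,0}=f_v(\cdot\mid x)$ and $Q_{x,1}=f_u(\cdot\mid x)$). Each target quantity has a clean scalar expression: a direct calculation gives $H^2(f_u,f_v)=2E_X\bigl[1-e^{l_x(1/2)-l_x(1)/2}\bigr]$; substituting $\log(f_u/f_v)=A-l_x(1)$ and $E_{f_u}[A]=l_x'(1)$ yields $K(f_u,f_v)=E_X[l_x'(1)-l_x(1)]$; and the identity $E_{f_u}\!\left[(A-l_x(1))^2\right]=l_x''(1)+(l_x'(1)-l_x(1))^2$ gives $V(f_u,f_v)=E_X\!\left[l_x''(1)+(l_x'(1)-l_x(1))^2\right]$.

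At this point all three bounds follow from Taylor expansion. For the Hellinger statement, the estimate $1-e^{-y}\le y$ combined with the centred second-difference identity $l_x(1)/2-l_x(1/2)=\tfrac{1}{16}[l_x''(\xi_1)+l_x''(\xi_2)]$ (obtained from second-order Taylor expansions of $l_x$ at $1/2$ evaluated at $0$ and $1$) yields $H^2\le\mathcal{K}^2\delta^2/4$. For KL, Taylor expansion of $l_x$ around $t=1$ and $l_x(0)=0$ give $l_x'(1)-l_x(1)=l_x''(c)/2\le\mathcal{K}^2\delta^2/2$. For $V$, the KL bound controls the squared term and $l_x''(1)\le\mathcal{K}^2\delta^2$ controls the variance term, giving $V\le\mathcal{K}^2\delta^2(1+\mathcal{K}^2\delta^2/4)$. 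All three estimates are in fact sharper than what the lemma asserts, so they imply the stated inequalities once $\mathcal{K}$-dependent constants are absorbed and one uses the trivial bound $e^{\mathcal{K}\delta}(1+2\mathcal{K}\delta)^j\ge 1$.

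The main obstacle is purely conceptual: recognising the multiplicative-tilt representation $f_u=f_v e^A/m$ and the fact that the family $\{Q_{x,t}\}_{t\in[0,1]}$ interpolates $f_v\to f_u$, so that $H^2$, $K$, and $V$ can all be read off a single one-dimensional CGF $l_x$. Once that reduction is made, the convexity of $l_x$ and the trivial bound $l_x''\le\|A\|_\infty^2\le\mathcal{K}^2\delta^2$ do all the work, and the proof proceeds by entirely routine Taylor expansions.
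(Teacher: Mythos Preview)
Your argument is correct and takes a genuinely different route from the paper's. The paper bounds the Hellinger distance directly: with $a=\sqrt{\Phi(u)}$, $b=\sqrt{\Phi(v)}$, two triangle inequalities give $H(f_u,f_v)\le 2\|1-b/a\|_\infty$, and a first-order Taylor expansion of $\sqrt{\Phi(\mu+\cdot)/\Phi(\mu)}$ together with the ratio bound $\Phi(L)/\Phi(U)\ge e^{-\mathcal{K}(U-L)}$ yields $H\le\mathcal{K}\delta\,e^{\mathcal{K}\delta/2}$. For $K$ and $V$ the paper does \emph{not} work from scratch but invokes Lemma~8 of Ghosal and van der Vaart (2007), $K\lesssim H^2(1+\log\|f_u/f_v\|_\infty)$ and $V\lesssim H^2(1+\log\|f_u/f_v\|_\infty)^2$, combined with the pointwise bound $\log\|f_u/f_v\|_\infty\le 2\mathcal{K}\delta$ that you also derive.

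Your approach is more unified and more elementary: the exponential-tilt representation $f_u=f_v\,e^A/m$ and the single scalar CGF $l_x$ reduce all three quantities to Taylor remainders controlled by $l_x''\le\|A\|_\infty^2\le\mathcal{K}^2\delta^2$, with no appeal to an external lemma. This also yields strictly sharper bounds (no $e^{\mathcal{K}\delta}$ factor), as you note. The paper's route, on the other hand, is shorter to write given the Ghosal--van der Vaart lemma as a black box and is the style readers of this literature would expect; its Hellinger step is also somewhat more robust in that it goes through $\|1-b/a\|_\infty$ rather than requiring the exact tilt identity. Both arguments exploit the same consequence of Condition~L, namely that $\log\Phi$ is $\mathcal{K}$-Lipschitz.
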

More generally, one expects that Theorem \ref{thm:sparse} can be
improved to allow for additive decompositions $r_{0}(y,x)=\sum_{j=1}^{J}r_{0j}(y,x)$
where the $r_{0j}$'s are functions which are $D_{j}$-sparse and
$\alpha_{j}$-Hölder continuous. Results in this framework \citep{linero2017abayesian,rockova2017posterior,yang2015minimax}
suggest that we should be able to obtain a rate $\epsilon_{n}=\sum_{j=1}^{J}n^{-\alpha_{j}/(2\alpha_{j}+D_{j})}\log(n)^{t_{j}}+\sqrt{n^{-1}D_{j}\log(P+1)}$,
which is a substantial improvement on Theorem \ref{thm:sparse}. One
difficulty with extending these results is that Condition P2 only
allows a single bandwidth, while different $\tau$'s will be optimal
for different $\alpha_{j}$'s. Unlike the nonparametric regression
setting, however, it is unclear how one would interpret the additivity
assumption for SBART-DS. We leave examining the additive framework
to future work.

\section{Illustrations \label{sec:Simulation-Study}}

\subsection{Simulation Study}

We now assess the performance of the SBART-DS using the simulation
example described by \citet{dunsonetal2007}. The response $Y_{i}$
is sampled from a mixture model 
\[
Y_{i}\sim e^{-2x}\text{Normal}(x,0.1^{2})+(1-e^{-2x})\text{Normal}(x^{4},0.2^{2})\qquad\text{given }X_{i1}=x.
\]
We set $N=500$ and have $P-1$ additional predictors which do not
influence the response. The marginal density of the $X_{i}$'s is
uniform on $[0,1]^{P}$. For SBART-DS we use the default prior with
$M\equiv50$ and the probit link. We do not make any attempt to tune
the hyperparameters $(a,\sigma_{\mu},\rho,\alpha,\beta,\gamma)$ beyond
this. We take the base model to be a normal linear regression model
$h(y\mid x,\theta)=\Normal(y\mid\alpha_{\theta}+\beta_{\theta}^{\top}x,\sigma_{\theta}^{2})$.
We consider moderate dimensions $P$ for illustrative purposes, but
in higher dimensions one might wish to induce sparsity $\beta_{\theta}$.

\begin{figure}
\centering{}\includegraphics[width=.8\textwidth]{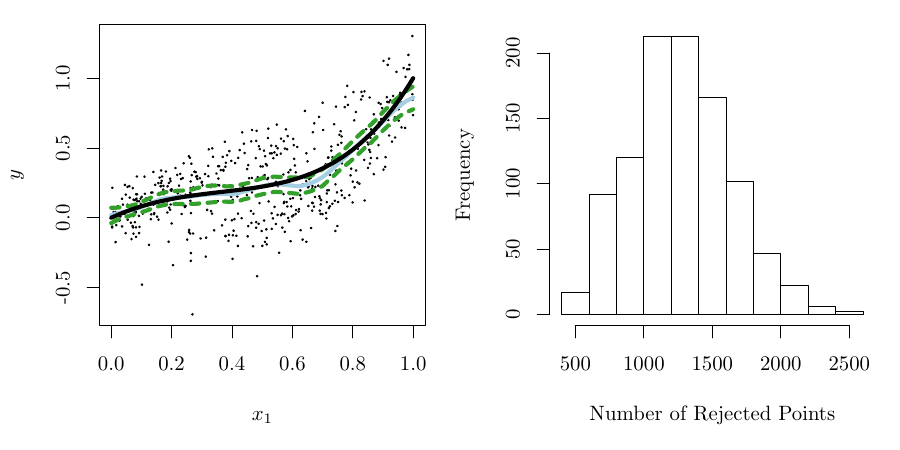}\caption{(Left) Plot of realized values of $X_{i1}$ against $Y_{i}$ for a
single replication of the experiment, with solid black line indicating
the true mean, light blue line indicating the estimated posterior
mean, and the dashed green lines indicating 95\% credible bands for
the mean function. (Right) The posterior distribution of the number
of rejected points estimated via Markov chain Monte Carlo.}
\label{fig:sim-setup}
\end{figure}

Figure \ref{fig:sim-setup} displays a scatterplot of the relationship
between $Y_{i}$ and $X_{i1}$ as well as the posterior mean and credible
band for the function $r(x)=\E(Y_{i}\mid X_{i}=x)$ with $P=5$. We
compare SBART-DS to a Dirichlet process mixture model described by
\citet{jaraetal2011} as implemented in the function \texttt{DPcdensity}
in the \texttt{DPpackage} package in \texttt{R}; we use this as a
comparison because there is publicly available software implementing
this methodology and \citet{jaraetal2011} show that it performs similarly
to the approach of \citet{dunsonetal2007}. This model uses the joint
specification
\begin{align*}
(X_{i},Y_{i}) & \sim\int\text{Normal}\left\{ \begin{pmatrix}x\\
y
\end{pmatrix}\mid\begin{pmatrix}\mu_{x}\\
\mu_{y}
\end{pmatrix},\begin{pmatrix}\Sigma_{xx} & \Sigma_{xy}\\
\Sigma_{yx} & \Sigma_{yy}
\end{pmatrix}\right\} \ dG(\mu,\Sigma),
\end{align*}
where $G\sim\text{DP}(\alpha G_{0})$ is a Dirichlet process with
a normal-inverse-Wishart base measure $G_{0}\equiv\text{Normal}(\mu\mid m,\kappa_{0}\Sigma)\ \text{IW}(\Sigma\mid\nu,\Psi)$.
The conditional density of $[Y_{i}\mid X_{i}=x]$ can be estimated
from an infinite mixture model as 
\[
f(y\mid x)=\sum_{k=1}^{\infty}\omega_{k}(x)\ \text{Normal}(y\mid\mu_{y\mid x},\Sigma_{y\mid x})
\]
where $\omega_{k}(x)\propto\pi_{k}\Normal(x\mid\mu_{x},\Sigma_{xx})$,
$\mu_{y\mid x}=\mu_{y}+\Sigma_{yx}\Sigma_{xx}^{-1}(x-\mu_{x})$, and
$\Sigma_{y\mid x}=\Sigma_{yy}-\Sigma_{yx}\Sigma_{xx}^{-1}\Sigma_{xy}$.
We use the same prior specification as \citet{jaraetal2011} but with
a larger value of $\nu$ to accommodate the fact that $\nu>P-1$ is
required.

\begin{figure}
\centering{}\includegraphics[width=0.8\textwidth]{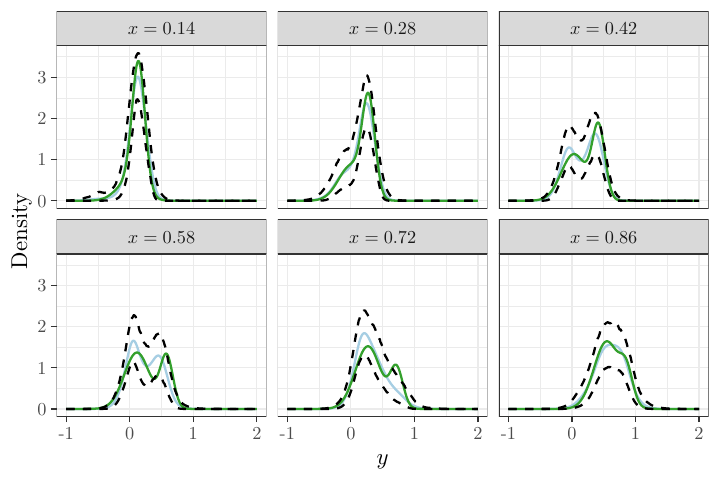}\caption{Posterior mean (blue), 95\% credible bands for the density (dashed
black) and true density function (green) for the simulated data, for
the values $X_{i1}\in\{0.14,0.28,0.42,0.58,0.72,0.86\}$. }
\label{fig:sim-density}
\end{figure}

Figure \ref{fig:sim-density} shows the fitted density for several
fixed values of $X_{i1}$ with all other predictors frozen at the
value $X_{ij}=0.5$ (as these predictors were correctly filtered out
of the model, their particular value is irrelevant). We see SBART-DS
successfully captures variability in the location, shape, and scale
of the densities, and produces 95\% credible bands which accurately
account for uncertainty in the estimates. SBART-DS also captures the
mean response accurately (left panel of Figure \ref{fig:sim-setup}).
Additionally, the number of rejected points is not prohibitively large,
and fitting SBART-DS was faster than fitting the Dirichlet process
mixture model using \texttt{DPcdensity}.

Table \ref{tab:results} compares SBART-DS to the Dirichlet process
mixture over $100$ replications of the above experiment with $P=20$.
We compare methods using the integrated total variation distance 
\[
\text{TV}(f_{0,}\widehat{f})=\int_{[0,1]^{P}}\int_{-\infty}^{\infty}|f_{0}(y\mid x)-\widehat{f}(y\mid x)|\ dy\ dx.
\]
This integral can be approximated via Monte Carlo integration by averaging
over a large out-of-sample test set of $X_{i}$'s and computing the
$dy$ integral numerically. We see that the Dirichlet process mixture
performs substantially worse than SBART-DS as measured by total variation
distance from the true data generating mechanism. It is somewhat surprising
that SBART-DS outperforms a Dirichlet process mixture for this example,
as the true model is a mixture model with a structure that one would
expect a Dirichlet process mixture to be primed to detect. The reason
that SBART-DS performs better is that the BART prior we used performs
variable selection and is capable of eliminating the $19$ irrelevant
predictors, whereas the Dirichlet process mixture is not designed
to detect sparsity. We expect that any method which does not explicitly
try to detect sparsity, such as the probit stick-breaking prior with
Gaussian processes proposed by \citet{rodriguez2011nonparametric}
or the kernel-based approach of \citet{dunsonetal2007} would also
be outperformed by SBART-DS, although we were unable to assess this
due to a lack of publicly available software for these approaches.

\begin{table}
\begin{centering}
\begin{tabular}{lrrr}
\toprule 
Method & Normalized Average $\text{TV}(f,\widehat{f})$ & $25^{\text{th}}$ percentile & $75$ percentile\tabularnewline
\midrule
SBART-DS & 1.00 & 0.94 & 1.05\tabularnewline
Dirichlet Process Mixture & 1.74 & 1.71 & 1.78\tabularnewline
\bottomrule
\end{tabular}
\par\end{centering}
\caption{Average of the integrated total variation distance $\text{TV}(f_{0},\widehat{f})$
over $100$ replications of the simulation study; to give a sense
of stability, we also give the $25^{\text{th}}$ and $75^{\text{th}}$
quantiles of these quantities over the 100 replications. For interpretability,
we normalized both scores by the average integrated total variation
distance of the SBART-DS model.}

\label{tab:results}
\end{table}

Summarizing our simulation study, we find that the ability of SBART-DS
to perform variable selection allows it to outperform Dirichlet process
mixtures. While we only conducted a formal simulation study for the
$P=20$ case, we found similar behavior for small values of $P$ as
well. Unlike other Bayesian nonparametric density regression approaches,
the ability to perform variable selection is automatic for SBART-DS.

\subsection{Analysis of MEPS Data \label{sec:xxx-Real-Data}}

We apply SBART-DS to data from the Medical Expenditure Panel Survey
(MEPS) from the year 2015. MEPS is an ongoing survey in the United
States which collects data on families/individuals, their medical
providers, and employers, with a focus on the cost and use of health
care.

There is a large literature which has considered the relationship
between socioeconomic status, education, and obesity. Educational
attainment relates to obesity in a complex fashion, with the effect
modified by the overall income of a region, gender, and other factors
\citep{cohen2013educational}. We examined this relationship on a
subset of the MEPS dataset consisting of responses from $1452$ women
aged between $25$ and $35$ years old, controlling for log-income
(measured as a percentage of the poverty line), age, and race. Existing
research predicts that higher educational attainment will be associated
with lower obesity levels in this group.

\begin{figure}
\begin{centering}
\includegraphics[width=0.8\textwidth]{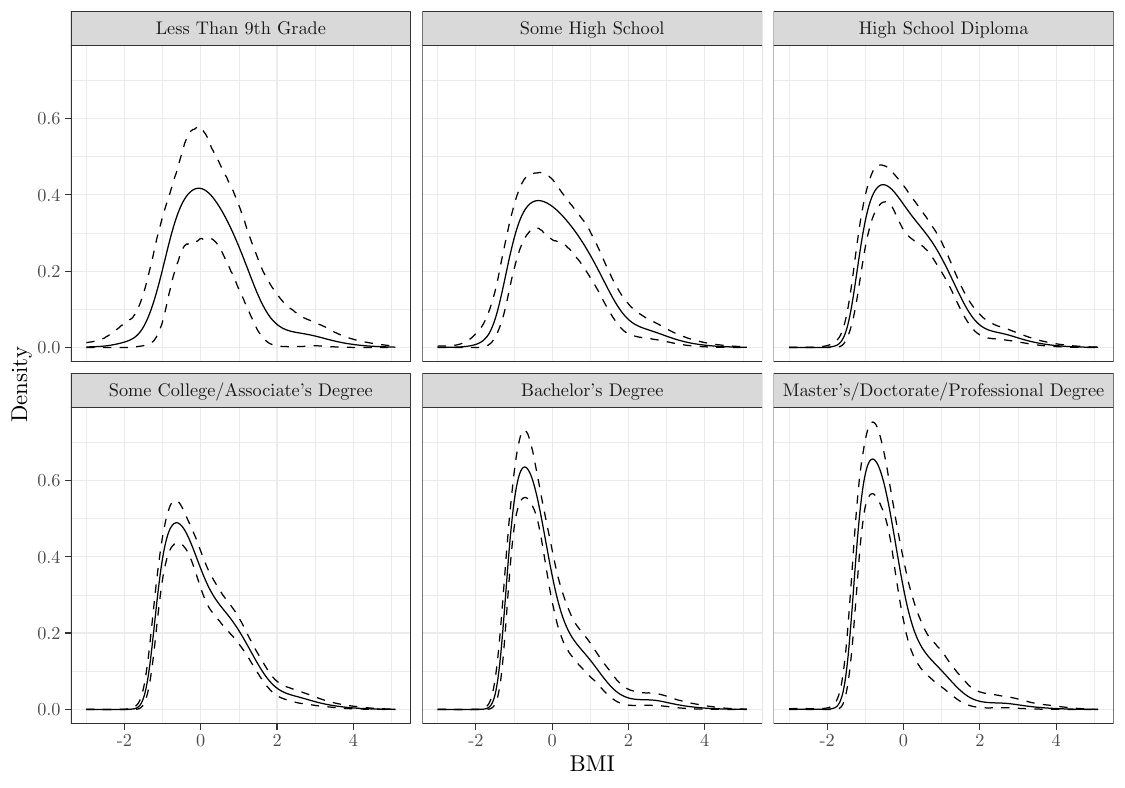}
\par\end{centering}
\centering{}\caption{Density estimates and 95\% credible bands for $f(y\mid x)$ for different
educational levels for white women aged between 25 and 35, fixing
log-income and age at their median values.}
\label{fig:density-est}
\end{figure}

\begin{figure}
\begin{centering}
\includegraphics[width=0.9\textwidth]{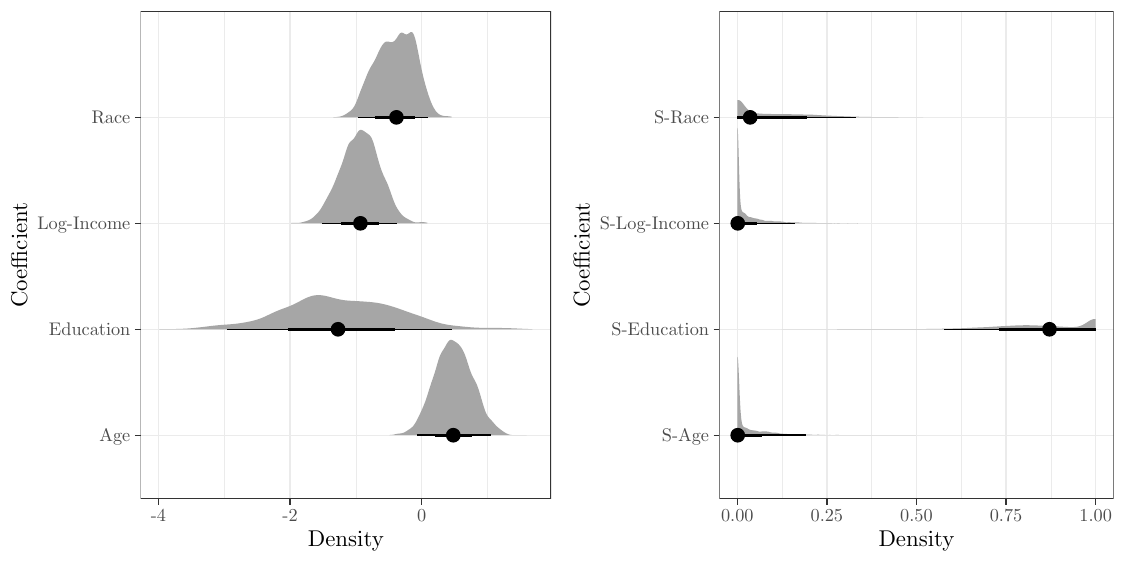}
\par\end{centering}
\caption{Left: Posterior medians, (66\%, 95\%)-credible intervals, and density
estimates for the regression coefficients of the base mode $\beta_{\theta}$.
Right: posterior medians, credible intervals, and density estimates
for the splitting proportions $s_{j}$ for each predictor.}
\label{fig:coefs}

\end{figure}
In Figure \ref{fig:density-est} we display the estimated density
as the level of educational attainment is varied from less-than-high-school
to graduate degree for white women with all other covariates frozen
at their median value. We see that as educational attainment increases
the bulk of the distribution remains concentrated near $0$ (the overall
mean level of BMI) but goes from roughly symmetric to being highly
right-skewed. The nature of this relationship is that, while the modal
value BMI is fairly stable as education level changes, highly educated
women are less likely to be highly obese.

Each predictor $j$ is associated to two coefficients: the base model
coefficient $\beta_{\theta j}$ and the splitting proportion $s_{j}$.
The posterior median, density, and a (66\%, 95\%)-credible interval
is given for each coefficient in Figure \ref{fig:coefs}. Interestingly,
education level is the only relevant predictor in the selection model
$\Phi\{r(y,x)\}$ so that the overall shape of the density is primarily
determined by education. Intuitively, one might expect that education
is only relevant through its indirect effect on income, however our
results suggest this is not the case. Log-income has a strong presence
in the base model as well, while race and age have weaker effects.

\section{Discussion \label{sec:Discussion}}

In this paper we proposed a new method for density regression based
on Bayesian additive regression trees. SBART-DS is suitable for routine
use --- it has a simple default specification, strong theoretical
properties, and can be fit using a tuning-parameter-free Gibbs sampling
algorithm. On simulated data we illustrated how SBART-DS is capable
of filtering out irrelevant variables automatically, giving empirical
support to the theoretical results supporting a faster posterior concentration
rate when the rejection model $\Phi\{r(y,x)\}$ is sparse. Using data
from MEPS we showed how SBART-DS can capture the effect of education
level on the conditional distribution of body mass index.

The general strategy of defining a prior using a rejection sampling
model can be used to extend that SBART-DS model to other domains.
For example, this approach can be extended to survival analysis by
modeling the hazard function $\lambda(y\mid x)$ as the hazard of
a thinned Poisson process $\lambda(y\mid x)=\lambda_{0}(y\mid x)\ \Phi\{r(y,x)\}$.
We will pursue this direction in future work.

\appendix

\section{Proof of Auxiliary Results}
\begin{proof}
[Proof of Lemma \ref{lem:compare}] For posterity, we note that Condition
L implies $\frac{d}{d\mu}\log\Phi(\mu)\le\mathcal{K}$; integrating
both sides on an interval $[L,U]$ gives 
\begin{align}
e^{-\mathcal{K}(U-L)}\le\frac{\Phi(L)}{\Phi(U)}\le\frac{\Phi(U)}{\Phi(L)} & \le e^{\mathcal{K}(U-L)}.\label{eq:supexp}
\end{align}
Let $a=\sqrt{\Phi\{u(y,x)\}}$ and $b=\sqrt{\Phi\{v(y,x)\}}$ and
let $\|a\|_{h}^{2}$ denote the squared $L_{2}$-norm $\int a^{2}(y,x)\ h(y)$
(which implicitly depends on $x$). Then 
\[
H(f_{u},f_{v})=\int\left\Vert \frac{a}{\|a\|_{h}}-\frac{b}{\|b\|_{h}}\right\Vert _{h}\ F_{X}(dx).
\]
Two applications of the triangle inequality gives 
\begin{equation}
H(f_{u},f_{v})\le\int\frac{2\|a-b\|_{h}}{\|a\|_{h}}\ F_{X}(dx)\le2\|1-b/a\|_{\infty}.\label{eq:hellbound}
\end{equation}
The first inequality follows from the triangle inequality while the
second follows from the inequality $\|a-b\|_{h}^{2}=\int a^{2}(y,x)\{1-b(y,x)/a(y,x)\}^{2}\ h(y)\le\|a\|_{h}^{2}\cdot\|1-b/a\|_{\infty}^{2}$.
Next, write $v(y,x)=u(y,x)+\Delta(y,x)$. Applying (\ref{eq:supexp})
and Taylor expanding the function $g(x)=\sqrt{\Phi(\mu+x)/\Phi(\mu)}$
we get 
\begin{align*}
\left|1-\sqrt{\frac{\Phi\{v(y,x)\}}{\Phi\{u(y,x)\}}}\right| & \le|\Delta(y,x)|\frac{\phi\{u(y,x)+\Delta_{1}(y,x)\}}{2\Phi\{u(y,x)+\Delta_{1}(y,x)\}}\ \sqrt{\frac{\Phi\{u(y,x)+\Delta_{1}(y,x)\}}{\Phi\{u(y,x)\}}}\\
 & \le\frac{\mathcal{K}}{2}\|u-v\|_{\infty}\exp\left[\frac{\mathcal{K}\|u-v\|_{\infty}}{2}\right]
\end{align*}
where $\Delta_{1}(y,x)$ is between $0$ and $\Delta(y,x)$. Combining
this with (\ref{eq:hellbound}), $H(f_{u},f_{v})\le\mathcal{K}\|u-v\|_{\infty}\exp(\mathcal{K}\|u-v\|/2)$.
By Lemma 8 of \citet{ghosal2007posterior}, we have 
\begin{align*}
K(f_{u},f_{v}) & \lesssim H^{2}(f_{u},f_{v})\left(1+\log\left\Vert \frac{f_{u}}{f_{v}}\right\Vert _{\infty}\right)\qquad\text{and}\\
V(f_{u},f_{v}) & \lesssim H^{2}(f_{u},f_{v})\left(1+\log\left\Vert \frac{f_{u}}{f_{v}}\right\Vert _{\infty}\right)^{2}.
\end{align*}
Using (\ref{eq:supexp}) we have 
\[
\frac{f_{u}(y\mid x)}{f_{v}(y\mid x)}=\frac{\Phi\{u(y,x)\}\int h(\widetilde{y})\ \frac{\Phi\{v(\widetilde{y},x)\}}{\Phi\{u(\widetilde{y},x)\}}\ \Phi\{u(\widetilde{y},x)\}\ d\widetilde{y}}{\Phi\{v(y,x)\}\int h(\widetilde{y})\ \Phi\{u(\widetilde{y},x)\}\ d\widetilde{y}}\le\exp(2\mathcal{K}\|u-v\|_{\infty}).
\]
Hence $\log\|f_{u}/f_{v}\|_{\infty}\le2\mathcal{K}\|u-v\|_{\infty}$.
\end{proof}
\begin{proof}
[Proof of Proposition \ref{prop:sc}] Consider an extended prior $\widetilde{\Pi}$
on the joint distribution of $(X_{i},Y_{i})$ which places a point
mass at $F_{X}$. We now have that $f_{r}$ is contained in the integrated
Hellinger and Kullback-Leibler neighborhoods whenever $(f_{r},F_{X})$
are in the usual Hellinger and Kullback-Leibler neighborhoods of $F_{0}(dx,dy)=f_{0}(y\mid x)\ dy\ F_{X}(dx)$,
so that the problem reduces to the setting of iid random vectors.
The conditions (a)---(c) match one-to-one with the conditions of
the variant of Theorem 2.1 of \citet{ghosal2000convergence} used
by \citet[page 627]{shen2013adaptive}, and hence suffice to establish
the desired rate of convergence.
\end{proof}
We now prove that Condition L holds for the logit and $t_{\nu}$ links.
\begin{prop}
If $\Phi(\mu)=e^{\mu}/(1+e^{\mu})$ then Condition L holds with $\mathcal{K}=1$.
If $\Phi(\mu)=T_{\nu}(\mu)$ where $T_{\nu}$ is the distribution
function of a $t_{\nu}$ random variable then Condition L holds with
$\mathcal{K}=\sqrt{\nu}$. Conversely, suppose $Z$ is symmetric and
has distribution function $\Phi(\mu)$ and $Z$ is light-tailed in
the sense that for all $\mathcal{K}$ we have $\Pr(Z>z)<e^{-\mathcal{K}z}$
for sufficiently large $z$. Then Condition L fails for $\Phi(\mu)$.
In particular, Condition L fails for the probit link.
\end{prop}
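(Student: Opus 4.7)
For the logit, a direct calculation gives $\phi(\mu) = \Phi(\mu)\{1-\Phi(\mu)\}$, so $\phi(\mu)/\Phi(\mu) = 1-\Phi(\mu) \le 1$ for every $\mu$, and $\mathcal{K}=1$ suffices. For the $t_\nu$ link, my plan is to study the ratio $F(\mu) = \phi_\nu(\mu)/T_\nu(\mu)$ at its global maximum. First I observe that $F$ is continuous, strictly positive, and tends to $0$ at both infinities: trivially as $\mu\to+\infty$, and as $\mu\to-\infty$ by the standard polynomial tail asymptotics for $t$ distributions, namely $\phi_\nu(z)\asymp z^{-(\nu+1)}$ and $1-T_\nu(z)\asymp z^{-\nu}$ as $z\to+\infty$, which combined with symmetry give $F(-z)\asymp z^{-1}\to 0$. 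Hence the supremum is attained at some $\mu^{\ast}\in\mathbb{R}$. Differentiating $\log F = \log\phi_\nu - \log T_\nu$ yields $F'/F = \phi'_\nu/\phi_\nu - F$, so $F'(\mu^{\ast})=0$ translates into
\[
F(\mu^{\ast}) = \frac{\phi'_\nu(\mu^{\ast})}{\phi_\nu(\mu^{\ast})} = -\frac{(\nu+1)\mu^{\ast}}{\nu + (\mu^{\ast})^{2}}.
\]
Positivity of $F$ forces $\mu^{\ast}<0$, and a one-line calculus optimization of $g(t) = (\nu+1)t/(\nu+t^{2})$ over $t>0$, using $g'(t) = (\nu+1)(\nu-t^{2})/(\nu+t^{2})^{2}$, identifies the maximum at $t=\sqrt{\nu}$ with value $(\nu+1)/(2\sqrt{\nu})$. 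Therefore $\sup_{\mu} F(\mu) \le (\nu+1)/(2\sqrt{\nu}) \le \sqrt{\nu}$ whenever $\nu\ge 1$, establishing Condition L with $\mathcal{K}=\sqrt{\nu}$.

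For the converse I argue contrapositively. Assume $\phi(\mu)/\Phi(\mu)\le\mathcal{K}$ for every $\mu$. Since $\phi/\Phi = \tfrac{d}{d\mu}\log\Phi$, integrating over $[\mu,0]$ with $\mu<0$ yields $\log\Phi(\mu) \ge \log\Phi(0) - \mathcal{K}|\mu| = -\log 2 - \mathcal{K}|\mu|$, using the symmetry identity $\Phi(0)=1/2$. Exponentiating and applying symmetry again, $\Pr(Z>z) = \Phi(-z) \ge \tfrac{1}{2}e^{-\mathcal{K}z}$ for every $z>0$. But the light-tail hypothesis, applied with $\mathcal{K}' = \mathcal{K}+1$, gives $\Pr(Z>z) < e^{-(\mathcal{K}+1)z}$ for $z$ sufficiently large, while $\tfrac{1}{2}e^{-\mathcal{K}z} > e^{-(\mathcal{K}+1)z}$ as soon as $z>\log 2$---a contradiction. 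Hence no finite $\mathcal{K}$ can bound $\phi/\Phi$, and Condition L fails. For the probit specifically, Mill's inequality $\Pr(Z>z)\le(z\sqrt{2\pi})^{-1}e^{-z^{2}/2}$ is asymptotically smaller than $e^{-\mathcal{K}z}$ for every fixed $\mathcal{K}$ (because $z^{2}/2 - \mathcal{K}z + \log(z\sqrt{2\pi})\to+\infty$), so the standard normal satisfies the light-tail hypothesis and Condition L fails for the probit link.

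The only nontrivial step is the $t_\nu$ critical-point optimization. The key trick is to recognize that at a critical point of $F$ the ratio equals the log-derivative of the density, which reduces the problem to a single-variable calculus exercise; the delicate piece is confirming that the supremum is actually attained, which rests on the standard $t$-tail asymptotics at $-\infty$. Everything else is either direct computation, the fundamental theorem of calculus applied to $\log\Phi$, or Mill's inequality.
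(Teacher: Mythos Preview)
Your logit and converse arguments coincide with the paper's: both observe $\phi/\Phi=1-\Phi\le 1$ for the logit, and both integrate the bound $(\log\Phi)'\le\mathcal{K}$ to force an exponential lower tail that contradicts the light-tail hypothesis (the paper phrases this via its inequality~(\ref{eq:supexp}) but the content is identical). Your use of Mill's inequality for the probit is slightly more than needed; the paper simply invokes $\Pr(Z>z)\le e^{-z^2/2}$.

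For the $t_\nu$ link your route is genuinely different. The paper derives an explicit lower bound on the survival function $\bar T_\nu(\mu)$ by the substitution $z=1+x^2/\nu$, $\theta=\arcsin(1/\sqrt{z})$, obtaining $\bar T_\nu(\mu)\ge c\sqrt{\nu}\,\nu^{-1}z_0^{-p+1/2}$ and hence $t_\nu(\mu)/T_\nu(\mu)\le\sqrt{\nu}\,z_0^{-1/2}\le\sqrt{\nu}$ for \emph{every} $\nu>0$. Your critical-point argument is cleaner---no integral manipulation, just the observation that at a maximizer $F=\phi'_\nu/\phi_\nu$, followed by a one-variable optimization---and it actually delivers the sharper constant $(\nu+1)/(2\sqrt{\nu})$, which is strictly below $\sqrt{\nu}$ for $\nu>1$. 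The trade-off is scope: your final inequality $(\nu+1)/(2\sqrt{\nu})\le\sqrt{\nu}$ requires $\nu\ge 1$, so as written you do not establish $\mathcal{K}=\sqrt{\nu}$ for $0<\nu<1$, whereas the paper's tail bound does. This is a minor gap in the stated generality (and irrelevant for the links one would use in practice), but worth flagging.
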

\begin{proof}
For the logistic link it is straight-forward to check that $\phi(\mu)/\Phi(\mu)=1-\Phi(\mu)\le1$.
To prove the result for the $T_{\nu}$ link we begin by deriving a
lower bound for the survival function $\bar{T}_{\nu}(\mu)=\int_{\mu}^{\infty}t_{\nu}(x)\ dx$
for $\mu>0$. Note that the density is $t_{\nu}(\mu)=c/(1+\mu^{2}/\nu)^{p}$
where $p=(\nu+1)/2$ and $c$ is a normalizing constant. Define $z_{0}=1+\mu^{2}/\nu$,
$z=1+x^{2}/\nu$, $\theta_{0}=\arcsin(1/\sqrt{z_{0}})$ and $\theta=\arcsin(1/\sqrt{z})$.
Then after routine substitutions we have 
\begin{align*}
\bar{T}_{\nu}(\mu) & =\int_{z_{0}}^{\infty}\frac{c\sqrt{\nu}}{2\sqrt{z-1}}z^{-p}\ dz=c\sqrt{\nu}\int_{0}^{\theta_{0}}\sin^{2p-2}(\theta)\ d\theta\\
 & \ge c\sqrt{\nu}\int_{0}^{\theta_{0}}\sin^{2p-2}(\theta)\ \cos(\theta)\ d\theta=\frac{c\sqrt{\nu}}{2p-1}z_{0}^{-p+1/2}.
\end{align*}
Using this, the symmetry of the $t_{\nu}$ distribution, and substituting
$2p-1=\nu$, we have
\[
\frac{t_{\nu}(\mu)}{T_{\nu}(\mu)}\le\frac{t_{\nu}(|\mu|)}{\bar{T}_{\nu}(|\mu|)}\le\frac{c}{z_{0}^{p}}\cdot\frac{\sqrt{\nu}}{cz_{0}^{-p+1/2}}=\sqrt{\nu}z_{0}^{-1/2}\le\sqrt{\nu}.
\]
For the converse, set $U=0$ and $L=-z$ in (\ref{eq:supexp}) to
get $\Pr(Z>z)\ge\Phi(0)e^{-\mathcal{K}z}$ for some $\mathcal{K}$.
In particular, $\Pr(Z>z)\ge\exp(-2\mathcal{K}z)$ for large enough
$z$. In the case of the probit link, no such $\mathcal{K}$ can exist
because $\Pr(Z>z)\le e^{-z^{2}/2}$.
\end{proof}
The proof of Theorem \ref{thm:sparse} also requires a tail probability
bound for the number of trees in the ensemble.
\begin{prop}
\label{prop:M-tail} Let $\Pi(M=t)$ satisfy Condition P1. Then there
exist constants $C'_{M1}$ and $C'_{M2}$ such that $\Pi(M\ge t)\le C'_{M1}\exp\{-C'_{M2}t\log t\}$.
\end{prop}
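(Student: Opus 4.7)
My plan is to prove the tail bound by summing the probability mass function over $k \ge t$ and exploiting the convexity of $k \log k$ to factor out the dominant term $\exp(-C_{M2}\lceil t\rceil \log\lceil t\rceil)$ from the geometric tail that remains. Writing $t_0 = \lceil t \rceil$, I would start from
\begin{equation*}
\Pi(M \ge t) = \sum_{k = t_0}^{\infty} C_{M1}\exp\{-C_{M2}\,k\log k\}.
\end{equation*}
The key observation is that the function $g(k) = k\log k$ is convex on $[1,\infty)$ with derivative $g'(k) = \log k + 1$, which is nondecreasing. Hence for every integer $k \ge t_0 \ge 1$,
\begin{equation*}
k\log k \;\ge\; t_0\log t_0 + (k - t_0)(\log t_0 + 1) \;\ge\; t_0\log t_0 + (k - t_0),
\end{equation*}
using only $\log t_0 + 1 \ge 1$.

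Substituting this lower bound into the exponent and factoring yields
\begin{equation*}
\Pi(M \ge t) \;\le\; C_{M1}\exp\{-C_{M2}\,t_0\log t_0\}\,\sum_{k=t_0}^{\infty}\exp\{-C_{M2}(k-t_0)\} \;=\; \frac{C_{M1}}{1 - e^{-C_{M2}}}\exp\{-C_{M2}\,t_0\log t_0\}.
\end{equation*}
Since the map $x \mapsto x \log x$ is nondecreasing on $[1,\infty)$ and $t_0 \ge t$ whenever $t \ge 1$, we get $t_0 \log t_0 \ge t\log t$. Thus for $t \ge 1$ the claim holds with $C'_{M2} = C_{M2}$ and $C'_{M1} = C_{M1}/(1 - e^{-C_{M2}})$.

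The only remaining detail is the range $0 \le t < 1$, where $t\log t$ takes values in $[-e^{-1},0]$, making $\exp\{-C'_{M2}t\log t\}$ bounded below by a positive constant; since $\Pi(M \ge t) \le 1$, the bound can be forced to hold on this range by further enlarging $C'_{M1}$ if necessary. There is no real obstacle here; the argument is a short convexity-plus-geometric-series estimate, and the only mild care is in ensuring that the constants are chosen uniformly in $t$ (including small $t$) and that the passage from the integer sum at $\lceil t\rceil$ back to the stated bound at $t$ goes through, which follows from monotonicity of $x\log x$ on $[1,\infty)$.
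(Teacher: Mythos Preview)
Your proof is correct and follows the same overall scheme as the paper: write $\Pi(M\ge t)$ as a sum, lower-bound $k\log k$ linearly in $k$ to reduce to a geometric series, then patch up small $t$ by adjusting the leading constant. The one substantive difference is the linear lower bound itself. The paper uses the cruder monotonicity estimate $k\log k \ge k\log t$, which yields a geometric ratio $t^{-C_{M2}}$ that depends on $t$; this forces a restriction to $t$ large enough that $t^{C_{M2}}-1\ge 1$ and ultimately only recovers $C'_{M2}=C_{M2}/2$. Your tangent-line (convexity) bound $k\log k \ge t_0\log t_0 + (k-t_0)$ produces a fixed ratio $e^{-C_{M2}}$, so the series sums uniformly in $t\ge 1$ and you keep the full exponent $C'_{M2}=C_{M2}$. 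Both are adequate for the paper's purposes, but your version is a bit cleaner and sharper.
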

\begin{proof}
By the geometric series formula we have
\[
\Pi(M\ge t)=C_{M1}\sum_{k=t}^{\infty}e^{-C_{M2}k\log k}\le\frac{C_{M1}\exp\{-C_{M2}t\log t\}}{1-t^{-C_{M2}}}=\frac{C_{M1}\exp\{-C_{M2}(t-1)\log t\}}{t^{C_{M2}}-1}.
\]
For $t>2\vee2^{1/C_{M2}}$ this gives $\Pi(M\ge t)\le C_{M1}\exp\{-(C_{M2}/2)t\log t\}$.
The result follows by taking $C'_{M2}=C_{M2}/2$ and $C'_{M1}$ to
be the maximum of $C_{M1}$ and $\exp\{(C_{M2}/2)t\log t\}$ for $t\le2\vee2^{1/C_{M2}}$.
\end{proof}

\section{Proof of Theorem \ref{thm:sparse}}

For completeness, we state two results of \citet{linero2017abayesian}
which will be used in the proof. These two propositions capture the
features of SBART that make it useful in high-dimensional sparse settings
with smooth regression functions.
\begin{prop}
\label{prop:LY-1} Suppose that Condition F and Condition P are satisfied
and $t\ge\alpha(D+1)/(2\alpha+D)$. Then there exist constants $B$
and $C$ independent of $(n,P)$ such that for all sufficiently large
$n$ the prior satisfies 
\[
\Pi(\|r-r_{0}\|\le B\epsilon_{n})\ge e^{-C\epsilon_{n}^{2}},
\]
where $\epsilon_{n}=n^{-\alpha/(2\alpha+D)}\log(n)^{t}+\sqrt{D\log(P+1)/n}$.
\end{prop}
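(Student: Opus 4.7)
The plan is to verify the three conditions of Proposition \ref{prop:sc} with $\bar\epsilon_n = \epsilon_n$ as specified in the theorem. The central technical move is to transfer all the analysis to the supremum norm on $r$ via Lemma \ref{lem:compare}: a ball $\{r : \|r-r_0\|_\infty \le \delta\}$ for small $\delta$ sits inside an integrated Hellinger ball of radius $O(\delta)$ and also inside the integrated KL neighborhood $K(O(\delta))$, because the $\exp(\mathcal{K}\|r-r_0\|_\infty)$ and $(1+2\mathcal{K}\|r-r_0\|_\infty)^j$ factors in Lemma \ref{lem:compare} are bounded by constants as $\|r-r_0\|_\infty \to 0$. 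I will therefore work with sup-norm bounds on $r$ throughout, and then translate.

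For condition (c) (prior thickness), apply Proposition \ref{prop:LY-1} at the admissible exponent $t=\alpha(D+1)/(2\alpha+D)$ to obtain constants $B, C_1$ with $\Pi(\|r-r_0\|_\infty \le B\epsilon_n) \ge \exp(-C_1 n\epsilon_n^2)$. By Lemma \ref{lem:compare}, on the event $\{\|r-r_0\|_\infty \le B\epsilon_n\}$ one has $\KL(f_0,f_r) \le C_2 \epsilon_n^2$ and $V(f_0,f_r) \le C_2 \epsilon_n^2$ for a constant $C_2$ depending only on $\mathcal{K}$ and $B$. Shrinking $B$ if needed forces this inclusion into $K(\epsilon_n)$, which verifies (c) with $C = C_1$.

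For (a) and (b), construct a sieve of conditional densities $\Sieve_n = \{f_r : r \in \mathcal{F}_n\}$ where
\[
\mathcal{F}_n = \bigl\{ r \sim \SBART(\pi_\Tree,\pi_\sM) : M \le M_n,\ \tau^{-1} \le T_n,\ \max_{m,\ell} |\mu_{m\ell}| \le L_n \bigr\},
\]
with $M_n, T_n, L_n$ chosen to be polynomial in $n\epsilon_n^2$. Lemma \ref{lem:compare} gives $H(f_u,f_v) \le \mathcal{K} \|u-v\|_\infty e^{\mathcal{K}\|u-v\|_\infty/2}$, so a $\delta$-net of $\mathcal{F}_n$ in the sup norm produces an $O(\delta)$-net of $\Sieve_n$ in $H$ once we truncate sup norms by $L_n$. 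Sup-norm covering number bounds for SBART sieves of this form are furnished by the results of \citet{linero2017abayesian}: the depth is bounded a.s.\ by Condition P5, the number of topologies with splits drawn from the prior is polynomial in $n$, the leaves lie in a box of side $2L_n$, and the soft-gating function $\psi$ is $T_n$-Lipschitz when $\tau^{-1} \le T_n$ (Condition P6). Calibrating $(M_n, T_n, L_n)$ then yields (a). For (b), union over the three defining events and invoke Proposition \ref{prop:M-tail} together with Conditions P2 and P4, each of which provides a stretched-exponential tail. Choosing thresholds so that each tail is at most $\tfrac{1}{3}\exp\{-(C+4)n\epsilon_n^2\}$ while keeping the thresholds polynomial in $n\epsilon_n^2$ is possible because $M_n, T_n, L_n$ appear in the entropy bound only polynomially but in the tail bounds as stretched exponentials.

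The main obstacle is the simultaneous calibration of $(M_n,T_n,L_n)$: the entropy requirement pushes toward small thresholds (to keep $\log N \lesssim n\epsilon_n^2$), while the support requirement pushes toward large thresholds (to deplete $\Pi(\mathcal{F}_n^c)$). The feasibility of this trade-off is exactly the content of Condition P (tails of $M$, $\tau$, and $\mu_{m\ell}$ that are stretched-exponential with exponents $<1$, combined with the polynomial-in-$n$ growth of the sup-norm entropy of bounded SBART ensembles). The remaining step, converting from $\|\cdot\|_\infty$-neighborhoods on $r$ to Hellinger neighborhoods on $f_r$ in the entropy estimate, is uniform in $\mathcal{F}_n$ because $L_n$ grows only polylogarithmically in the relevant regime, keeping the exponential factor in Lemma \ref{lem:compare} subpolynomial and hence absorbable into the entropy constant.
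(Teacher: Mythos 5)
Your proposal does not prove the statement it was assigned. Proposition \ref{prop:LY-1} is a prior small-ball (thickness) bound: it asserts that the SBART prior places mass at least $e^{-Cn\epsilon_n^2}$ on the sup-norm ball $\{\|r-r_0\|_\infty\le B\epsilon_n\}$. What you have written is instead an outline of the proof of Theorem \ref{thm:sparse} (posterior contraction of $f_r$), verifying conditions (a)--(c) of Proposition \ref{prop:sc} --- and in your step for condition (c) you explicitly \emph{invoke} Proposition \ref{prop:LY-1} to supply the sup-norm prior mass bound. As an argument for Proposition \ref{prop:LY-1} itself this is circular: the one quantity you were asked to establish is the one quantity you assume. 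Lemma \ref{lem:compare}, the sieve construction, Proposition \ref{prop:M-tail}, and the entropy/complement calibration are all irrelevant to the target statement; they belong to the proof of the theorem, not of this proposition.

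What is actually required is a lower bound on $\Pi(\|r-r_0\|_\infty\le B\epsilon_n)$ for an $\alpha$-H\"older, $D$-sparse $r_0$. The standard route (and the one the paper outsources) has two parts: an approximation step, showing that a sum of $M\asymp n\epsilon_n^2/\log n$ soft trees of bounded depth with bandwidth $\tau$ in a suitable window can approximate $r_0$ to within $\epsilon_n/2$ in sup norm (this is where Conditions P5--P6 and the smoothness $\alpha$ enter, and where the $n^{-\alpha/(2\alpha+D)}(\log n)^t$ term originates); and a prior-mass step, lower-bounding the probability that the prior realizes such a configuration --- in particular the probability that all splits land in the $D$ relevant coordinates under the sparse Dirichlet prior P3 (which produces the $\sqrt{D\log(P+1)/n}$ term), that $\tau$ falls in the required window (P2), that the leaf parameters land near their target values (P4), and that $M$ takes the required value (P1). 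None of these ingredients appear in your write-up. The paper's own proof is a one-line reduction to Theorem 2 of \citet{linero2017abayesian}, with the remark that the modified Conditions P1 and P2 do not alter that proof; if you are not permitted to cite that result wholesale, you must carry out the approximation and prior-mass computation directly.
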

\begin{proof}
This is implied by Theorem 2 of \citet{linero2017abayesian}; the
only modification required is that Condition P1 and Condition P2 are
modified from \citet{linero2017abayesian}, but these modifications
do not change the proof strategy.
\end{proof}
\begin{prop}
\label{prop:LY-2} For fixed positive constants $\epsilon,\sigma_{1},\sigma_{2},T,A$
and integers $n,H,d$ define the set 
\begin{align*}
\mathcal{G}= & \Big\{ f(\cdot)=\sum_{t=1}^{T}g(\cdot;\Tree_{t},\sM_{t}):\ T\le An\epsilon^{2},\text{ each tree has depth at most }H,\\
 & \text{the common bandwidth parameter \ensuremath{\tau} satisfies }\sigma_{1}\le\tau^{-1}\le\sigma_{2},\\
 & \text{the total number of splitting directions is at most }d\text{ out of \ensuremath{P+1}},\\
 & \text{for each \ensuremath{(t,\ell),} }\mu_{t\ell}\in[-U,U]\Big\}.
\end{align*}
Then there exists a constant $C_{\psi}$ depending only the gating
function $\psi$ of the SBART prior satisfying Condition P such that
the following holds:
\begin{enumerate}
\item Covering entropy control: $\log N(\mathcal{G},C_{\psi}\varepsilon,\|\cdot\|_{\infty})\leq d\log(P+1)+3An\varepsilon^{2}\,2^{H}\log\big(d\,\sigma_{1}^{-1}\sigma_{2}^{2}An\varepsilon\,2^{H}U\big)$;
and
\item Complement probability bound: if $H\geq d_{0}$, then $\Pi(\mathcal{G}^{c})\leq C'_{M1}\exp\{-C'_{M2}\,An\varepsilon^{2}\log(An\epsilon^{2})\}+2^{H}\,An\varepsilon^{2}$
$\cdot\big[\exp\{-E\,d\log(P+1)\}+C_{\mu1}\,\exp\{-U^{C_{\mu2}}\}\big]+C_{\tau1}\,\exp\{-\sigma_{1}^{-C_{\tau2}}\}+C_{\tau3}\exp\{-\sigma_{2}^{C_{\tau4}}\}$
for some constant $E>0$ depending only on hyperparameter $\xi>1$
in the Dirichlet prior.
\end{enumerate}
\end{prop}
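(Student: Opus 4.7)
The plan is to prove both halves of Proposition \ref{prop:LY-2} by an explicit parameter-level analysis of the SBART ensemble: the entropy assertion follows from gridding the continuous parameters and counting the discrete ones, while the complement-probability assertion follows from a union bound over the defining events of $\mathcal{G}$.

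For the entropy bound I would parameterize each $f\in\mathcal{G}$ by (i) a subset of $d$ active splitting coordinates out of $P+1$, (ii) a tree topology for each of the $T\le An\varepsilon^{2}$ trees, (iii) one cutpoint $C_{b}\in[0,1]$ per internal branch, (iv) one leaf value $\mu_{t\ell}\in[-U,U]$ per leaf, and (v) one bandwidth $\tau^{-1}\in[\sigma_{1},\sigma_{2}]$. The coordinate subsets contribute at most $\binom{P+1}{d}\le(P+1)^{d}$, giving the $d\log(P+1)$ summand, and the topologies contribute at most $2^{T\cdot 2^{H}}$ configurations, absorbed into the second summand. Using Condition P6 (specifically that $\psi$, $\psi'$, and $x\psi'(x)$ are bounded) each continuous parameter is a Lipschitz input to $f$ with constant bounded by a fixed polynomial in $\sigma_{2}$, $U$, $T$, and $2^{H}$. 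Gridding each continuous parameter so that the aggregate sup-norm perturbation is at most $C_{\psi}\varepsilon$ produces a net whose logarithm is bounded by $d\log(P+1)$ plus (total number of continuous parameters) $\times$ $\log(\text{grid size})$; the number of parameters is $O(T\cdot 2^{H})=O(An\varepsilon^{2}\cdot 2^{H})$, and collecting all the prefactors inside the logarithm yields the claimed argument $d\,\sigma_{1}^{-1}\sigma_{2}^{2}\,An\varepsilon\cdot 2^{H}U$.

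For the complement bound I would decompose $\mathcal{G}^{c}$ into the disjoint failure events corresponding to each defining constraint. The event $\{M>An\varepsilon^{2}\}$ is controlled directly by Proposition \ref{prop:M-tail}, yielding the first summand. Because $H\ge d_{0}$ and Condition P5 forces $\Pi(D_{m}>d_{0})=0$, the depth constraint is automatic on the surviving event and contributes nothing. On that event the ensemble contains at most $An\varepsilon^{2}\cdot 2^{H}$ leaves, which upper bounds both the number of splitting decisions and the number of leaf values; a union bound over leaves combined with the tail in Condition P4 produces the $2^{H}An\varepsilon^{2}\cdot C_{\mu1}\exp\{-U^{C_{\mu2}}\}$ term. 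The $\exp\{-Ed\log(P+1)\}$ factor requires a Dirichlet tail estimate: after marginalizing $s\sim\Dirichlet(a/P^{\xi},\ldots,a/P^{\xi})$, the probability that strictly more than $d$ distinct coordinates are ever used across $2^{H}An\varepsilon^{2}$ splits is bounded by $\binom{P+1}{d+1}$ times the probability that every coordinate in a fixed $(d+1)$-set is selected at least once, and the scaling $a/P^{\xi}$ with $\xi>1$ makes this geometrically small in $d\log(P+1)$ with exponent $E=E(\xi)$. The two bandwidth tails come verbatim from Condition P2.

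I expect the Dirichlet step to be the main obstacle. Because the splits are not i.i.d.\ draws from $s$ (they are conditional on the tree structure and on the available cut window $[L_{j_b},U_{j_b}]$ at each branch), one cannot quote an off-the-shelf multinomial tail; instead one conditions on $s$, uses a marginal bound on the event ``coordinate $j$ is ever used'' for each $j$, integrates against the Dirichlet density, and exploits $\xi>1$ to produce an exponent independent of $P$. The gridding step for the entropy bound is mainly bookkeeping, but verifying that the Lipschitz constant in $\tau^{-1}$ remains uniformly controlled as $\tau^{-1}$ ranges over $[\sigma_{1},\sigma_{2}]$ (the source of the $\sigma_{1}^{-1}$ inside the logarithm) and correctly aggregating the per-parameter Lipschitz constants so that individually gridded parameters jointly form a $C_{\psi}\varepsilon$-net rather than only an $\varepsilon$-net takes some care.
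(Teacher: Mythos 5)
Your proposal is correct and matches the intended argument: the paper does not spell this proof out but simply defers to Lemma 1 of the supplementary material of \citet{linero2017abayesian} (which proceeds exactly by your gridding-plus-counting entropy bound and your union bound over the defining constraints, including the Dirichlet tail estimate for the number of distinct splitting coordinates), noting only that the tail bound for the number of trees $M$ must now come from Proposition \ref{prop:M-tail} because of the modified Condition P1 --- a substitution you correctly identify as the first summand of the complement bound. One minor reassurance: conditional on $s$ the splitting coordinates $j_{b}\sim\Categorical(s)$ are in fact drawn independently of the cut windows under the stated tree prior, so the non-i.i.d.\ complication you flag in the Dirichlet step does not actually arise.
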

\begin{proof}
The proof is the same as the proof of Lemma 1 of the supplementary
material of \citet{linero2017abayesian}, except that Proposition
\ref{prop:M-tail} is used in the complementary probability bound.
\end{proof}
\begin{proof}
[Proof of Theorem \ref{thm:sparse}] Let $B$ and $C$ be chosen as
in Proposition \ref{prop:LY-1}. By Lemma \ref{lem:compare}, note
that for sufficiently large $n$, we have $\{f_{r}:\|r-r_{0}\|\le B\epsilon_{n}\}\subseteq K(BC_{\mathcal{K}}\epsilon_{n})$
and $\{f_{r}:\|r-r_{0}\|\le\epsilon_{n}\}\subseteq\{f_{r}:H(f_{r},f_{r_{0}})\le C_{\mathcal{K}}\epsilon_{n}\}$
where $C_{\mathcal{K}}$ is a constant depending only on $\mathcal{K}$.
Hence 
\[
\Pi\{f\in K(BC_{\mathcal{K}}\epsilon_{n})\}\ge e^{-Cn\epsilon_{n}^{2}}.
\]
To lighten notation, we redefine $\epsilon_{n}$ throughout the rest
of the proof to be $\epsilon_{n}BC_{\mathcal{K}}$ and $C$ to be
$C/(BC_{\mathcal{K}})^{2}$ so that we have $\Pi\{f\in K(\epsilon_{n})\}\ge e^{-Cn\epsilon_{n}^{2}}$.
This verifies (c) of Proposition \ref{prop:sc} using the modified
choice of $\epsilon_{n}$.

Next, for a large constant $\kappa$ to be chosen later, set $A=\kappa/\log n$,
$\sigma_{1}^{-C_{\tau2}}=\sigma_{2}^{C_{\tau4}}=U^{C_{\mu2}}=\kappa n\epsilon_{n}^{2}$,
$H=d_{0}$, and $d=\lfloor\kappa n\epsilon_{n}^{2}/\log(P+1)\rfloor$
for the set $\mathcal{G}_{n}$ in Proposition \ref{prop:LY-2}. Plugging
these constants into the covering entropy bound, for sufficiently
large $n$ this implies that for $p_{1}=2+C_{\mu2}^{-1}+C_{\tau2}^{-1}+2C_{\tau4}^{-1}$
and $p_{2}=2p_{1}-1$ we have
\[
\log N(\mathcal{G}_{n},C_{\psi}\epsilon_{n},\|\cdot\|_{\infty})\le\kappa n\epsilon_{n}^{2}\left\{ 1+\frac{3\cdot2^{d_{0}}}{\log n}\log\left(\frac{2^{d_{0}}(\kappa n)^{p_{1}}\epsilon_{n}^{p_{2}}}{\log n}\right)\right\} \le\kappa'n\epsilon_{n}^{2}
\]
for some $\kappa'$ larger than $\kappa$ depending only on $\kappa$
and the constants in Condition P. Define $\mathcal{F}_{n}=\{f_{r}:r\in\mathcal{G}_{n}\}$.
By Lemma \ref{lem:compare}, for large enough $n$ any $C_{\psi}\epsilon_{n}$-net
$\mathcal{G}_{\text{net}}$ for $\mathcal{G}_{n}$ can be converted
into a $C_{\mathcal{K}}C_{\psi}\epsilon_{n}$-net $\Sieve_{\text{net}}=\{f_{r}:r\in\mathcal{G}_{\text{net}}\}$
for $\mathcal{F}_{n}$. Hence we also have the bound
\[
\log N(\mathcal{F}_{n},C_{\mathcal{K}}C_{\psi}\epsilon_{n},H)\le\log N(\mathcal{G}_{n},C_{\psi}\epsilon_{n},\|\cdot\|_{\infty})\le\kappa'n\epsilon_{n}^{2}
\]
which establishes condition (a) of Proposition \ref{prop:sc} with
$\bar{\epsilon}_{n}=C_{\mathcal{K}}C_{\psi}\epsilon_{n}$ and $C_{N}=\kappa'/(C_{\mathcal{K}}C_{\psi})^{2}$.
Finally, we show condition (b) holds. Applying the complementary probability
bound we can make $\Pi(\mathcal{F}_{n}^{c})\le\exp\{-(C+4)n\epsilon_{n}^{2}\}$
for any choice of $C$ by taking $\kappa$ sufficiently large. To
see why, note for example that $n\epsilon_{n}^{2}\ge an^{b}$ for
some positive constants $(a,b)$ so that for large $n$ we have
\[
An\epsilon_{n}^{2}\log(An\epsilon_{n}^{2})\ge\kappa n\epsilon_{n}^{2}\left\{ \frac{\log(\kappa/\log n)+\log a}{\log n}+b\right\} \ge\frac{\kappa b}{2}n\epsilon_{n}^{2}.
\]
Using similar arguments, for large $n$ we can bound each term of
the complementary probability bound by $\exp\{-\kappa\delta n\epsilon_{n}^{2}/2\}$
for some $\delta$ depending only on the constants in Condition P.
Taking $\kappa$ sufficiently large we can make the total bound less
than $\exp\{-(C+4)n\epsilon_{n}^{2}\}$ for arbitrary $C$. This proves
condition (b).
\end{proof}
\bibliographystyle{apalike}
\bibliography{mybib}

\end{document}